\newif\ifanon%
\newcommand{\anontext}[2]{\ifanon#1\else#2\fi}
\newif\ifarxivsubmit%
\newcommand{\arxivsubmittext}[2]{\ifarxivsubmit#1\else#2\fi}
\newif\ifcameraready%
\newcommand{\camerareadytext}[2]{\ifcameraready#1\else#2\fi}
\newcommandx{\fabian}[2][1=]{\todo[linecolor=gray,backgroundcolor=gray!25,bordercolor=gray,#1]{#2}}
\newcommandx{\lukas}[2][1=]{\todo[linecolor=blue,backgroundcolor=blue!25,bordercolor=blue,#1]{#2}}
\newcommandx{\needswork}[2][1=]{\todo[linecolor=red,backgroundcolor=red!25,bordercolor=red,#1]{#2}}
\newcommandx{\mohammad}[2][1=]{\todo[linecolor=Plum,backgroundcolor=Plum!25,bordercolor=Plum,#1]{#2}}
\newcommandx{\kevin}[2][1=]{\todo[linecolor=green,backgroundcolor=green!25,bordercolor=green,#1]{#2}}
\setlist[enumerate]*{label=(\arabic*)}
\setlist[itemize]{label=\textbullet}
\renewcommand{\vec}[1]{\vv{#1}}
\newcommand*{\crefns}[1]{{\@cref@sortfalse\cref{#1}}}
\DeclareDocumentCommand \inferrule { s O {} m m o }{%
  \IfBooleanTF{#1}%
  {%
    \mpr@inferstar*[#2]{#3}{#4}%
  }{%
    \mpr@inferrule[#2]{#3}{#4}%
  }%
  \IfValueT{#5}%
  {%
    \my@name@inferrule{#5}%
  }%
}
\NewDocumentCommand \my@name@inferrule { m }{%
  \def\@currentlabelname{\textsc{#1}}%
}
\crefname{definition}{Def.}{Defs.}
\crefname{fact}{Fact}{Facts}
\crefname{lstlisting}{Listing}{Listings}
\newcommand{\labelparenths}[1]{(#1)}
\newcommand{\labelparenthstextup}[1]{\textup{\labelparenths{\textbf{#1}}}}
\newskip\listisep
\newtheorem*{theorem*}{Theorem}
\newtheorem*{lemma*}{Lemma}
\newtheorem{objective}{Objective}
\Crefname{assumption}{Assumption}{Assumptions}
\newcommand{
  \input{}
}[1]{
  \input{#1}
}
\newcommand{\appendixref}[1]{%
\arxivsubmittext{\cref{#1}}%
{%
\camerareadytext{\protect\cite[\begin{NoHyper}\protect\hypersetup{hidelinks}\cref{#1}\end{NoHyper}]{todoarxivref}}{\cref{#1}}%
}}
\newcommand{\textiff}{if and only if\xspace}
\newcommand{\app}{\,}
\DeclarePairedDelimiter\parenths{(}{)}
\DeclarePairedDelimiter\abs{\lvert}{\rvert}
\newcommand{\define}{\equiv}
\newcommand{\constfont}[1]{\mathsf{#1}}
\newcommand{\mathtextfont}[1]{\mathrm{#1}}
\newcommand{\isabelle}{Isabelle\xspace}
\newcommand{\isabellehol}{\isabelle/HOL\xspace}
\newcommand{\mathcodefont}[1]{\mathtt{#1}}
\DeclarePairedDelimiterX\isatag[1]{\mathcodefont{[}}{\mathcodefont{]}}{#1}
\newcommand{\typefont}[1]{\mathsf{#1}}
\newcommand{\purefun}{\mathrel\typefont{\Rightarrow}}
\newcommand{\bool}{\typefont{bool}}
\newcommand{\valty}{\typefont{val}}
\newcommand{\stringty}{\typefont{string}}
\newcommand{\alphaty}{\typefont{\alpha}}
\newcommand{\betaty}{\typefont{\beta}}
\newcommand{\natty}{\typefont{\mathbb{N}}}
\newcommand{\relty}[2]{#1\purefun#2\purefun\bool}
\newcommand{\holimplies}{\longrightarrow}
\newcommand{\holhasty}{:}
\newcommand{\id}{\constfont{id}}
\newcommand{\map}{\constfont{map}}
\newcommand{\fold}{\constfont{fold}}
\newcommand{\comp}{\circ}
\newcommand{\vars}[1]{\constfont{regs}\app #1}
\newcommand{\ifconst}{\constfont{if}}
\newcommand{\thenconst}{\constfont{then}}
\newcommand{\elseconst}{\constfont{else}}
\newcommand{\holif}[3]{\ifconst\app #1\app\thenconst\app #2\app\elseconst\app #3}
\newcommand{\letconst}{\constfont{let}}
\newcommand{\inconst}{\constfont{in}}
\newcommand{\hollet}[3]{\letconst\app #1 = #2\app\inconst\app #3}
\DeclarePairedDelimiterXPP\funupd[3]{#1}{(}{)}{}{#2\coloneqq#3}
\newcommand{\stateretr}[2]{#1\app#2}
\newcommand{\natify}{\constfont{natify}}
\newcommand{\denatify}{\constfont{denatify}}
\newcommand{\pairnat}{\constfont{pair}}
\newcommand{\fstnat}{\constfont{fst}}
\newcommand{\sndnat}{\constfont{snd}}
\newcommand{\natselectorconst}{\constfont{select}}
\newcommand{\natselector}[2]{\natselectorconst_{#1,#2}}
\newcommand{\relnat}{\constfont{R}\natty}
\newcommand{\relnatify}[1]{#1^{\relnat}}
\newcommand{\caseconst}{\constfont{case}}
\newcommand{\casetype}[1]{\caseconst_{#1}}
\newcommand{\casenat}{\termholnat{\caseconst}}
\newcommand{\casetypenat}[1]{\termholnat{\caseconst_{#1}}}
\newcommand{\ofconst}{\constfont{of}}
\newcommand{\typnatify}[1]{#1^{\natty}}
\newcommand{\powsndnat}{\constfont{pow}\sndnat}
\newcommand{\tcabbrev}{TC} 
\newcommand{\cabbrev}{C} 
\newcommand{\wabbrev}{W} 
\newcommand{\minus}{-} 
\newcommand{\hol}{\mathtextfont{HOL}}
\newcommand{\termhol}[1]{#1}
\newcommand{\natabbrev}{\natty}
\newcommand{\holnat}{\hol^{\natabbrev}}
\newcommand{\termholnat}[1]{#1^{\natabbrev}}
\newcommand{\termfolnat}[1]{#1^{\natabbrev}}
\newcommand{\holtc}{\hol^{\mathtextfont{\tcabbrev}}}
\newcommand{\foltc}{\hol^{\mathtextfont{\tcabbrev}}}
\newcommand{\termholtc}[1]{#1^{\mathtextfont{\tcabbrev}}}
\newcommand{\tcnatabbrev}{\mathtextfont{\tcabbrev}\natty}
\newcommand{\holtcnat}{\hol^{\tcnatabbrev}}
\newcommand{\foltcnat}{\hol^{\tcnatabbrev}}
\newcommand{\termholtcnat}[1]{#1^{\tcnatabbrev}}
\newcommand{\termfoltcnat}[1]{#1^{\tcnatabbrev}}
\newcommand{\imp}{\typefont{IMP}}
\newcommand{\imptc}{\imp^{\mathtextfont{\tcabbrev}}}
\newcommand{\impwc}{\imp^{\mathtextfont{\wabbrev\cabbrev}}}
\newcommand{\impc}{\impwc}
\newcommand{\impw}{\imp^{\mathtextfont{\wabbrev}}}
\newcommand{\impminus}{\imp^{\mathtextfont{\minus}}}
\newcommand{\const}{\mathcal{C}}
\newcommand{\psubst}[2]{[#1 / #2]}
\newcommand{\impconstrfont}[1]{\mathtt{#1}}
\DeclarePairedDelimiterXPP\aval[2]{}{\llbracket}{\rrbracket}{_{#2}}{#1}
\newcommand{\aconstop}{\impconstrfont{C}}
\newcommand{\aconst}[1]{\aconstop\app#1}
\newcommand{\avarop}{\impconstrfont{R}}
\newcommand{\avar}[1]{\avarop\app#1}
\newcommand{\bigsteparrow}{\Rightarrow}
\newcommand{\imptccontextdelim}{\vdash}
\newcommand{\bigstep}[3]{(#1,#2)\bigsteparrow#3}
\newcommand{\bigstepDelim}[3]{\imptccontextdelim\bigstep{#1}{#2}{#3}}
\newcommand{\bigstepCtxt}[4]{#1\bigstepDelim{#2}{#3}{#4}}
\newcommand{\bigstepTime}[4]{(#1,#2)\bigsteparrow^{#3}#4}
\newcommand{\bigstepPart}[4]{(#1,#2)\bigsteparrow_{#3}#4}
\newcommand{\bigstepDelimTime}[5]{\imptccontextdelim\bigstepTime{#1}{#2}{#3}{(#4,#5)}}
\newcommand{\bigstepCtxtTime}[5]{#1\imptccontextdelim\bigstepTime{#2}{#3}{#4}{#5}}
\newcommand{\bigstepCtxtPart}[5]{#1\imptccontextdelim\bigstepPart{#2}{#3}{#4}{#5}}
\newcommand{\bigstepTimePart}[5]{(#1,#2)\bigsteparrow^{#3}_{#4}#5}
\newcommand{\bigstepDelimTimePart}[5]{\imptccontextdelim\bigstepTimePart{#1}{#2}{#3}{#4}{#5}}
\newcommand{\bigstepCtxtTimePart}[6]{#1\bigstepDelimTimePart{#2}{#3}{#4}{#5}{#6}}
\newcommand{\imptctoimpc}[1]{\llparenthesis #1 \rrparenthesis_{\circlearrowleft}}
\newcommand{\impctoimpw}[1]{\llparenthesis #1 \rrparenthesis_{\star}}
\newcommand{\copyopnd}[3]{\constfont{copy}^{#3}\app#1\app#2}
\newcommand{\adder}[2]{\constfont{adder}^{#2}\app#1}
\newcommand{\fulladder}[2]{\constfont{fullAdder}^{#2}\app#1}
\newcommand{\bitblast}[2]{{\llparenthesis #1 \rrparenthesis}^{#2}}
\newcommand{\bitblastAexp}[3]{{\llparenthesis #1 \rrparenthesis}^{#2}_{#3}}
\newcommand{\bitblastState}[2]{{\llparenthesis #1 \rrparenthesis}^{#2}}
\newcommand{\impminusminusadd}[4]{\impconstrfont{add}^{#4}\app#1\app#2\app#3}
\newcommand{\assignN}{Assign}
\newcommand{\whileTrueN}{WhT}
\newcommand{\whileFalseN}{WhF}
\newcommand{\ifTrueN}{IfT}
\newcommand{\ifFalseN}{IfF}
\newcommand{\seqN}{Seq}
\newcommand{\tailN}{Rec}
\newcommand{\callN}{Call}
\newcommand{\assignop}{\mathrel{\impconstrfont{\leftarrow}}}
\newcommand{\assign}[2]{#1\assignop#2}
\newcommand{\seqop}{\mathrel{\impconstrfont{;}}}
\newcommand{\seq}[2]{#1\seqop#2}
\DeclareMathOperator{\seqi}{\seqop}
\newcommand{\iif}[3]{\impconstrfont{IF}\ #1\ \impconstrfont{THEN}\ #2\ \impconstrfont{ELSE}\ #3}
\newcommand{\tail}{\impconstrfont{RECURSE}}
\newcommand{\callop}{\impconstrfont{CALL}}
\newcommand{\call}[2]{\callop\ #1\ \impconstrfont{RETURN}\ #2}
\newcommand{\whileop}{\impconstrfont{WHILE}}
\newcommand{\while}[2]{\whileop\ #1\ \impconstrfont{DO}\ #2}
\DeclareMathOperator{\statenormop}{\leadsto}
\newcommand{\statenorm}[2]{#1\statenormop#2}
\newcommand{\relarrow}{\Rrightarrow}
\newcommand{\funrel}[2]{#1\relarrow#2}
\newcommand{\depfunmapcolons}{::}
\newcommand{\funmaparrow}{\rightarrow}
\DeclarePairedDelimiterXPP\depfunmap[3]{}{[}{]}{\funmaparrow#3}{#1\depfunmapcolons#2}
\newcommand{\htoiargconst}{\constfont{arg}}
\newcommand{\htoiargi}[2]{\htoiargconst_{#1,#2}}
\newcommand{\htoiretconst}{\constfont{ret}}
\newcommand{\htoiret}[1]{\htoiretconst_{#1}}
\newcommand{\maxconst}[1]{\ensuremath{#1_{\text{max}}}}
\newcommand{\holnIfC}{\textbf{If}}
\newcommand{\holnLetC}{\textbf{Let}}
\newcommand{\holnLetBoundC}{\textbf{LetBound}}
\newcommand{\holnArgC}{\textbf{Arg}}
\newcommand{\holnCallC}{\textbf{Call}}
\newcommand{\holnTailCallC}{\textbf{Recurse}}
\newcommand{\holnNumC}{\textbf{Number}}
\newcommand{\holnIf}[3]{\holnIfC{} \app #1 \app #2 \app #3}
\newcommand{\holnLet}[2]{\holnLetC{} \app #1 \app #2}
\newcommand{\holnLetBound}[1]{\holnLetBoundC{} \app #1}
\newcommand{\holnArg}[1]{\holnArgC{} \app #1}
\newcommand{\holnCall}[2]{\holnCallC{} \app #1 \app [#2]}
\newcommand{\holnTailCall}[1]{\holnTailCallC{} \app [#1]}
\newcommand{\holnNum}[1]{\holnNumC{} \app #1}
\newcommand{\dens}[3]{{\llbracket #1 \rrbracket}^{#2}_{#3}}
\newcommand{\maxset}[1]{\ensuremath{\text{max}\app#1}}
\newcommand{\transport}{\textsc{Transport}\xspace}
\newcommand{\whiteboxtransport}[1]{#1^{%
\begin{tikzpicture} [dot/.style={draw,dash pattern=on 1.5pt off 1.2pt,rectangle,minimum size=2mm,inner sep=0pt,outer sep=0pt}]\path (0,0) coordinate[dot];\end{tikzpicture}}}
\newcommand{\threesat}{\textsc{3SAT}\xspace}
\newcommand{\sat}{\textsc{SAT}\xspace}
\newcommand{\true}{\impconstrfont{1}}
\newcommand{\false}{\impconstrfont{0}}
\begin{document}
\arxivsubmittext{\pagestyle{plain}}{\pagestyle{plain}}

\title{Proof-Producing Translation of Functional Programs into a Time \& Space Reasonable Model}

\author{\anontext{Anon.\ Author}{Kevin Kappelmann, Fabian Huch\camerareadytext{\thanks{Funded by the Deutsche Forschungsgemeinschaft (DFG, German Research Foundation) under the National Research Data Infrastructure – NFDI 52/1 – 501930651}}{}, Lukas Stevens, Mohammad Abdulaziz}}
\institute{\anontext{Anon.\ Place\\\email{anon.\ email}}{%
King's College London, Bush House 30 Aldwych, London WC2B 4BG, UK,\\
\email{mohammad.abdulaziz@kcl.ac.uk}\\
Technische Universität München, Boltzmannstrasse 3, Garching 85748, Germany,\\
\email{kevin.kappelmann@tum.de}, \email{huch@in.tum.de}, \email{lukas.stevens@in.tum.de}}}
\maketitle              
\begin{abstract}
We present a semi-automated framework
to construct and reason about programs in a deeply-embedded while-language.
The while-language we consider is a simple computation model that can simulate (and be simulated by) Turing Machines with a quadratic time and constant space blow-up.
Our framework derives while-programs from functional programs written in a subset of \isabellehol,
namely tail-recursive functions with first-order arguments and algebraic datatypes.
As far as we are aware,
it is the first framework targeting a computation model
that is reasonable in time and space from a complexity-theoretic perspective.
\keywords{Program synthesis  \and Certified compilation \and Interactive theorem proving \and Complexity theory \and Computation models}
\end{abstract}

\section{Introduction}

A large array of mathematical results has been formlised in interactive theorem provers.
One example is theoretical computer science, where extensive results on algorithm correctness
and computational objects, such as reductions, were proven.
Nonetheless, one type of results has proved to be particularly hard to formalise:
the construction and verification of algorithms modelled within a deeply embedded computation model, such as a Turing machine.
These constructions, however, are crucial to formalise the theory of efficient algorithms and computational complexity theory
as understood by practitioners of theoretical computing.
In these fields, most results show that an algorithm runs in polynomial time (worst-case, average case, etc.) when implemented as a program within a reasonable computation model, i.e.\ a model that can simulate Turing machines with a polynomial time and constant space overhead for \emph{all} computations~\cite{boasReasonableModels}.
Previous work on reasoning about concrete programs within a deeply embedded computation model did so in one of two ways:
\begin{enumerate}
\item Automated synthesis of deeply embedded programs
from functional programs in the theorem prover's logic
along with equivalence proofs~\cite{coqexctractionlambda,ProofProducingML}.
In the context of formalising complexity theory,
this approach was only used for a lambda calculus model so far~\cite{forsterThesis}.
In practical verification contexts,
it was used for a lambda calculus model~\cite{ProofProducingML}
and assembly~\cite{coqassemblysynthesis}.
\item Interactively proving properties of deeply embedded programs, mainly using Hoare or other program logics.
This approach is used in many practical verification settings~\cite{AutoCorres,Iris,WhileHOLfour}
and has been applied to imperative languages, including toy languages~\cite{ConcreteSemantics}, C, Java, and Rust.
In the context of formalising theoretical computer science, this approach was mainly used for Turing machines~\cite{urbanTuringMachines,turingMachinesMatita,forsterTuringMachines},
with few applications to lambda calculus~\cite{norrishComputabilityThy,weaklambdaembeddedcoq} and partial recursive functions~\cite{carneiropartrec}.
\end{enumerate}
To formalise the theory of efficient algorithms and complexity theory,
we suggest the synthesis into a simple, imperative, and reasonable computation model.

\paragraph{Why?}
As others have noted~\cite{CookLevinCoq,coqexctractionlambda}, reasoning about programs in a theorem prover’s native language is significantly easier than reasoning about deeply embedded programs.
This is crucial when verifying complex algorithms.
For example, G{\"a}her and Kunze~\cite{CookLevinCoq}
proved the Cook-Levin theorem with native Coq programs,
benefiting from existing tools and automation.
In contrast, Balbach~\cite{CookLevinTMIsabelle}
directly reasoned about Turing machines in \isabellehol, requiring extensive low-level reasoning.
Others~\cite{IMP2} note difficulties when reasoning about deeply-embedded programs, e.g. missing warning messages and no static type checking.


Although automated synthesis methods exist for lambda calculi, we believe it is worthwhile to develop them for an imperative model.
The main reason is that the lambda calculi used in previous work
are \emph{less equivalent} to Turing machines than imperative models:
they are only reasonable for computations solving decision problems but not reasonable in space in general~\cite{callbyvaluelambdareasonable}.
This limits their applicability when analysing complexity-theoretic questions.

Finally, while there exists a synthesis-based method for x86-assembly in Coq~\cite{coqassemblysynthesis},
it does not support user-defined datatypes, only allows pattern matching and recursion with fixed recursion schemes,
and its target language's semantics are vastly more complicated than needed for complexity-theoretic questions.

\paragraph{Contribution} We present a framework that
synthesises verified programs in a simple while-language that is reasonable ($\impw$) w.r.t.\ time and space in \isabellehol.
Our framework takes tail-recursive, first-order functions\footnote{%
Although the framework does not support higher-order functions,
each first-order instance of such a function can be compiled as described in \cref{sec:holtcnattoimptc}.} with algebraic datatypes as input.
We call this fragment $\foltc$.
Given a $\foltc$ function $f$,
our proof-producing metaprogram
synthesises a related $\foltc$ function which only operates on natural numbers.
We call this $\foltc$ fragment~$\foltcnat$.

The resulting $\foltcnat$ function $\termfoltcnat{f}$ is
compiled into a program $p$ of our deeply embedded, imperative language $\imptc$,
which supports function calls and tail-recursion.
We provide automation to interactively prove the correspondence between $\termfoltcnat{f}$ and $p$.
In all examples we tried, the proofs were automatic.
Finally, a verified chain of compilers in \isabellehol translates $\imptc$ programs into programs in our target computation model $\impw$.
\cref{fig:pipeline} gives an overview.
The blow-up introduced by each compiler is linear in the execution time.

Furthermore, we verify a compiler for $\impw$ programs that compiles them into programs in $\impminus$, a computation model with fixed bit-width registers.
This is a step towards showing that our framework and computation model are suited for formalising significant theorems in complexity theory, like the Cook-Levin Theorem, many of which need bit-blasting into a bounded bit-width computation model as a major proof step.
We show that this compilation step is correct and leads to a running blow-up linear in the product of the $\impw$ program running time and the size of the input.
\begin{figure}[t]
    \centering
    \begin{tikzpicture}
    \node (FOLTC) {$\foltc$};
    \node (FOLTCN) [right=5.5cm of FOLTC] {$\foltcnat$};
    \node (IMPTC) [below=0.5cm of FOLTCN] {$\imptc$};
    \node (IMPw) [below right=0.5 cm and 2cm of FOLTC] {$\impw$};
    \node (IMPminus) [below=0.5cm of FOLTC] {$\impminus$};

    \draw[->] (FOLTC) -- (FOLTCN) node[midway, above] {proof-producing metaprogram};
    \draw[->] (FOLTCN) -- (IMPTC) node[midway, right] {interactive deep embedding};
    \draw[->] (IMPTC) -- (IMPw) node[midway, above] {verified compilers};
    \draw[->] (IMPw) -- (IMPminus) node[midway, above] {verified compilers};

    \end{tikzpicture}
    \caption{Compilation pipeline from $\foltc$ functions to $\impminus$ programs.}
    \label{fig:pipeline}
\end{figure}

The target language only has one-bit-wide registers and only allows comparison to zero as well as bit assignments.
Both operations take constant time.
This is in addition to the regular program constructs of if-then-else and while-loops.
Hence it is clear that $\impminus$ is reasonable with respect to memory and time.


\paragraph{Availability} This article’s
\arxivsubmittext{supplementary material\footnote{\href{https://archive.softwareheritage.org/swh:1:dir:0885d0653246b8b2e81f168ed4a596d29200deed}{swh:1:dir:0885d0653246b8b2e81f168ed4a596d29200deed}}}{\camerareadytext{extended version~\cite{todoarxivref}}{supplementary material\footnote{\href{https://archive.softwareheritage.org/swh:1:dir:0885d0653246b8b2e81f168ed4a596d29200deed}{swh:1:dir:0885d0653246b8b2e81f168ed4a596d29200deed}}}}
\needswork{add the right URL}
includes the formalisation and a guide linking all definitions, results, and examples to their counterpart in the formalisation.
\cref{thm:spaceConsume,thm:comp_t_correct,thm:comp_c_correct,lem:bitblastProgram}
and all underlying sub-results are formalised in \isabellehol.
Other proofs are provided in the appendix.

\section{Preliminaries}

We built our framework in \isabellehol~\cite{isabellehol},
but our approach could be followed in any simple type theory (higher-order logic)~\cite{churchstt} or more expressive foundation.
We use standard lambda calculi syntax, i.e.
$t \Coloneqq  \lambda x.\app t \mid t_1\app t_2 \mid x \mid f$,
where $\lambda x.\app t$ denotes function abstractions,
$t_1\app t_2$ function applications,
$x$ bound variables, and
$f$ functions with defining equations $f=\lambda\vec{x}.\app t$.
We write $t\holhasty\alphaty$ for ''$t$ has type $\alphaty$''.
We use vector notation for $n$-ary applications, abstractions, and inputs:
\begin{align}
  f\app\vec{x}&\define f\app x_1\dotsb x_n\quad&
  \lambda\vec{x}.\app t&\define\lambda x_1\dotsb x_n.\app t\quad\\
  \vec{f_i\app x_i}&\define (f_1\app x_1)\dotsb (f_n\app x_n)\quad&
  \vec{\alphaty}\purefun\alphaty&\define \alphaty_1\purefun\dotsb\alphaty_n\purefun\alphaty
\end{align}
A function $f\holhasty\vec{\alpha}\purefun\alpha$ is
called \emph{higher-order} if some $\alphaty_i$ is a function type.
Otherwise, it is called \emph{first-order}.
An application $f\app\vec{x}\holhasty\alphaty$ is called \emph{higher-order}
if $\alphaty$ is a function type.
Otherwise, it is called \emph{first-order}.
An \emph{algebraic datatype} (ADT) $\vec{\alphaty}\app d=C_1\app \vec{\alphaty_1}\mid\dotsb\mid C_n\app\vec{\alphaty_n}$
is a possibly recursive sum type of product types
with constructors $C_i\holhasty \vec{\alphaty_i}\purefun \vec{\alphaty}\app d$
and case combinator
$\casetype{d}\holhasty \parenths{\vec{\alphaty_1}\purefun\alphaty}\purefun{\dotsb\purefun\parenths{\vec{\alphaty_n}\purefun\alphaty}\purefun\vec{\alphaty}\app d\purefun\alphaty}$.


Our framework takes as an \emph{input language} tail-recursive, first-order functions $f=\lambda\vec{x}.\app t$,
where $\vec{x}$ are the \emph{arguments} bound in the \emph{body} $t$.
The body may use other such functions and ADT functions in first-order applications.
\cref{fig:holadts} shows the grammar for recursive, first-order function bodies.
Our input language $\foltc$ is the subset where each recursive call $f\app\vec{t}$ is in tail-position.
\begin{figure}[t]
\begin{gather*}
\begin{aligned}
t \Coloneqq  & \; \hollet{x}{t_1}{t_2} & \text{(bind $t_1$ to $x$ in $t_2$)} \\
  \mid & \; (\casetype{d}\app t\app\ofconst\app \vec{x_1}\Rightarrow t_1\mid\dotsb\mid\vec{x_n}\Rightarrow t_n)  & \text{(case combinator of ADT $\vec{\alphaty}\app d$)}\\
  \mid & \; h\app \vec{t} & \text{(first-order application)}\\
  h \Coloneqq  & \; x  \mid g \mid f & \text{(bound variable $x$, function $g$, or $f$ itself)}
\end{aligned}
\end{gather*}
\caption{Grammar for body $t$ of a first-order function $f=\lambda\protect\vec{x}.\app t$ with ADTs.}\label{fig:holadts}
\end{figure}

We use several deeply-embedded, deterministic languages based on the \emph{imperative language} $\imp$ from Winskel's book~\cite{winskelOpSem},
which is well-studied in formal verification~\cite{coqSemanticsIMP,nipkowSemantics}.
Our $\imp$ languages operate on a \emph{state} of type
$\stringty\purefun \valty$.
Terms of type $\stringty$ are called \emph{registers}
and terms of type $\valty$ are called \emph{values}.
The registers in a program $p$ are denoted by $\vars\app p$.
As a computation model, $\imp$ allows simple reasoning about space since there is no
indirect memory addressing --
only registers occurring in a program can be accessed.
In our final model, $\impminus$, values are single bits,
making the space usage bounded by $\vars\app p$.
We define $\impminus$ in \cref{sec:impwtoimpminus},
and a summary of all $\imp$ languages can be found in~\appendixref{sec:all_imps}.
Here, we describe the intermediate languages used during compilation.
They use values of type $\natty$ and
consist of \emph{atoms}, \emph{(arithmetic) expressions} and \emph{(imperative) commands}.
Every atom is also an arithmetic expression.

We now define the \emph{semantics}.
With $\aval{A}{s}$ and $\aval{a}{s}$ we denote the evaluation (returning a natural number) of an
atom $A$ and expression $a$, respectively, under state $s$,
defined by the rules in \cref{fig:aval_atom_imp,fig:aval_imp}.
\begin{figure}[t]
\begin{subfigure}[t]{0.4\textwidth}
\begin{gather*}
\aval{\aconst{n}}{s}\define n,
\hspace{0.9em}
\aval{\avar{r}}{s}\define \stateretr{s}{r}
\end{gather*}
\caption{Evaluation of atoms $A$.}\label{fig:aval_atom_imp}
\end{subfigure}
\begin{subfigure}[t]{0.6\textwidth}
\begin{gather*}
\aval{A_1\otimes A_2}{s}\define \aval{A_1}{s} \otimes \aval{A_2}{s},
\text{ for } \otimes\in\{+,-\}
\end{gather*}
\caption{Evaluation rules for arithmetic expressions.}\label{fig:aval_imp}
\end{subfigure}
\begin{subfigure}[t]{\textwidth}
\begin{mathpar}
  \inferrule*[left=\assignN]
  {s'=\funupd{s}{r}{\aval{a}{s}}}
  {\bigstepTime{\assign{r}{a}}{s}{\const}{s'}}[\assignN]\label{infrule:assign}

  \inferrule*[left=\ifTrueN]
  {\stateretr{s}{r} \neq 0 \\ \bigstepTime{p_1}{s}{n}{s'}}
  {\bigstepTime{\iif{r}{p_1}{p_2}}{s}{n+\const}{s'}}[\ifTrueN]\label{infrule:iftrue}

  \inferrule*[left=\seqN]
  {\bigstepTime{p_1}{s}{n_1}{s'} \\
  \bigstepTime{p_2}{s'}{n_2}{s''}}
  {\bigstepTime{\seq{p_1}{p_2}}{s}{n_1+n_2+\const}{s''}}[\seqN]\label{infrule:seq}

  \inferrule*[left=\ifFalseN]
  {\stateretr{s}{r} = 0 \\ \bigstepTime{p_2}{s}{n}{s'}}
  {\bigstepTime{\iif{r}{p_1}{p_2}}{s}{n+\const}{s'}}[\ifFalseN]\label{infrule:iffalse}
\end{mathpar}
\caption{Execution relation for commands,
where each $\const$ is a fixed, non-negative constant.
}\label{fig:exec_imp}
\end{subfigure}
\caption{Semantics shared by our $\imp$-languages, excluding the final $\impminus$.}\label{fig:imp_semantics}
\end{figure}
With
$\bigstepCtxtTimePart{tp}{p}{s}{n}{R}{s'}$
we denote that in context of the program $tp$,\footnote{The context is only relevant for recursive calls in \cref{sec:holtcnattoimptc}.}
the program $p$ started from $s$ terminates in at most $n$ steps with state
$s''$ such that $\stateretr{s'}{r}=\stateretr{s''}{r}$ for all $r\in R$:
\begin{gather*}
\parenths[\big]{\bigstepCtxtTimePart{tp}{p}{s}{n}{R}{s'}} \define \exists n'\, s''.\; \bigstepCtxtTime{tp}{p}{s}{n'}{s''} \land \forall r\in R.\app \stateretr{s'}{r}=\stateretr{s''}{r} \land n'\leq n,
\end{gather*}
where $\bigsteparrow^n$ is the \emph{(big-step) execution relation},
defined inductively using the rules from \cref{fig:exec_imp} along with the language-specific rules specified in their respective sections.
In rule~\nameref{infrule:assign},
the state $s$ is updated at register $r$
using the state update notation $\funupd{s}{r}{v}\define\lambda r'.\app \holif{r=r'}{v}{s\app r'}$.
If we are only interested in a single return register $r$, we write
\begin{gather*}
\parenths[\big]{\bigstepCtxtPart{tp}{p}{s}{r}{v}}\define \exists s'.\; \parenths[\big]{\bigstepCtxt{tp}{p}{s}{s'}\land \stateretr{s'}{r}=v}.
\end{gather*}
\todo{Example of while-loop and reasoning about it...}
To simplify notation, we
omit ``$\dots \imptccontextdelim$'' if the context is irrelevant,
$R$ if the program terminates exactly with state $s'$,
and $n$ if the number of steps is irrelevant.

One may question why rule~\nameref{infrule:assign} only takes constant, and not logarithmic, time.
This is merely for simplicity and makes no difference for polynomial time considerations since
logarithmic space usage (i.e.\ space usage in a binary encoding of $\natty$)
is bounded by time in all intermediate $\imp$ languages, as shown in \cref{thm:spaceConsume}.
For this, let $\maxconst{a}$ denote the largest constant in an expression $a$,
$\maxconst{p}$ the largest constant in a program~$p$,
and $\maxconst{s}\define\maxset \{\stateretr{s}{r} \mid r\holhasty\stringty\}$.
\begin{theorem}\label{thm:spaceConsume}
If $\maxset\{\maxconst{s},\maxconst{p}\} < 2^w$ and
$\bigstepTime{p}{s}{n}{s'}$,
then $\maxconst{s'} < 2^{w+n}$.
\end{theorem}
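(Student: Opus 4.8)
The plan is to prove the generalised statement ``for every $w$, if $\maxset\{\maxconst{s},\maxconst{p}\} < 2^w$ and $\bigstepTime{p}{s}{n}{s'}$, then $\maxconst{s'} < 2^{w+n}$'' by rule induction on the big-step execution relation $\bigstepTime{p}{s}{n}{s'}$. Quantifying over $w$ inside the induction hypothesis is the essential move: when executions are sequenced, the second sub-execution starts from an already-grown state, so the hypothesis must be available at a larger bound. Two trivial facts are used throughout: $\maxconst{p'} \leq \maxconst{p}$ whenever $p'$ is a sub-program of $p$ (its set of constants is contained in that of $p$), and $2^{w+m} \leq 2^{w+n}$ whenever $m \leq n$, which applies because every step-count constant $\const$ is non-negative.

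The only rule that alters the state is \nameref{infrule:assign}, so it carries the real content. There $s' = \funupd{s}{r}{\aval{a}{s}}$ and $n$ is the \nameref{infrule:assign} constant; every register other than $r$ keeps a value $< 2^w$, so it suffices to bound $\aval{a}{s}$. Since the expression grammar admits only an atom or a single $\aplus{A_1}{A_2}$ / $\aminus{A_1}{A_2}$, and an atom evaluates either to a program constant $\leq \maxconst{p} < 2^w$ or to a register value $< 2^w$, each $\aval{A_i}{s}$ is $< 2^w$. Hence $\aval{\aplus{A_1}{A_2}}{s} < 2^w + 2^w = 2^{w+1}$, whereas truncated subtraction on $\natty$ only shrinks the value, so $\maxconst{s'} < 2^{w+1}$. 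The crux is precisely here: I rely on the \nameref{infrule:assign} step costing at least one unit of time, so that $2^{w+1} \leq 2^{w+n}$ and the single consumed step pays for the at-most-doubling.

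The branching rules \nameref{infrule:iftrue} and \nameref{infrule:iffalse} thread the state unchanged into the taken branch, which by sub-program monotonicity still satisfies the hypothesis, so the induction hypothesis gives $\maxconst{s'} < 2^{w+m} \leq 2^{w+n}$. The genuinely compositional case is \nameref{infrule:seq} (and, almost verbatim, the while-loop, function-call, and tail-recursion rules of the individual languages): from the first premise the induction hypothesis yields $\maxconst{s'} < 2^{w+n_1}$; I then re-instantiate the hypothesis on the second premise at the enlarged bound $w + n_1$, which is legitimate because both $\maxconst{s'} < 2^{w+n_1}$ and $\maxconst{p_2} \leq \maxconst{p} < 2^w \leq 2^{w+n_1}$, obtaining $\maxconst{s''} < 2^{(w+n_1)+n_2} \leq 2^{w+n}$.

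I expect the main obstacle to be the \nameref{infrule:assign} case: making airtight that one step at most doubles the maximum relies both on the grammar restricting expressions to a single operation over atoms (no nested expression can grow the value faster) and on the \nameref{infrule:assign} constant being strictly positive so that the bound $2^{w+n}$ absorbs the doubling. The remaining effort is bookkeeping: confirming that $\maxset$ is well behaved on the states that arise, and checking that each language-specific rule (while loops, function calls, tail-recursion) is either state-preserving or reduces, after accounting for the context, to the same \nameref{infrule:seq}-style threading of an enlarged bound.
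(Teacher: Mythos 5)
Your proof is correct and follows essentially the same route as the paper's: induction over the big-step execution relation, with the assignment case handled by the at-most-doubling bound and all remaining cases by the induction hypothesis plus algebraic manipulation of exponents. The details you make explicit --- generalising the induction over $w$ so that the sequencing (and while/call/recursion) cases can re-instantiate the hypothesis at the enlarged bound $w+n_1$, and the reliance on the \textsc{Assign} step cost $\const$ being at least $1$ so that $2^{w+1}\le 2^{w+n}$ --- are precisely what the paper's two-line sketch leaves implicit.
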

\begin{proof}[Proof sketch]
The proof is by induction over the execution relation.
In case of an assignment, the maximum size can at most double.
In other cases, the claim holds by applying the induction hypothesis, followed by algebraic manipulation.
\end{proof}

\section{$\hol^{(\mathtextfont{\tcabbrev})}$ to $\hol^{(\mathtextfont{\tcabbrev})\natty}$}\label{sec:holtctoholtcnat}
This section describes
the proof-producing translation of $\foltc$ functions into
equivalent $\foltcnat$ functions.
More generally, our translation applies to any $\hol$ function $f$ with ADTs,
producing a $\holnat$ function $\termholnat{f}$
that operates only on natural numbers
while preserving tail-recursiveness and first-order applications.

Intuitively, the function $f$ and its translation $\termholnat{f}$ should compute the same values for the same inputs,
up to encoding and decoding.
Our method is based on a technique called \emph{transport}, which, intuitively, takes
objects of one type and derives that ``corresponding'' objects exist of another type,
synthesising the latter objects as necessary.
To formalise this correspondence, we first review some concepts on relations.
\begin{definition}\label{def:relation}
A \emph{relation on $\alphaty$ and $\betaty$} is a function of type
$\relty{\alphaty}{\betaty}$.
Two functions $f,g$ are \emph{related from $R$ to $S$} if they map $R$-related inputs to $S$-related outputs,
written $\parenths{\funrel{R}{S}}\app f\app g\define \forall x\app y.\app R\app x\app y\holimplies S\app(f\app x)\app(g\app y)$.
Repeated relations are abbreviated as
$\parenths{\funrel{R^n}{S}}\define{}\parenths{{R\relarrow\dotsb\relarrow R}\relarrow S}$, i.e. $R$ repeated $n$-times.
\end{definition}

\let\oldholtc\holtc
\let\oldholtcnat\holtcnat
\let\oldtermholtc\termholtc
\let\oldtermholtcnat\termholtcnat
\renewcommand{\holtc}{\hol}
\renewcommand{\holtcnat}{\holnat}
\renewcommand{\termholtc}{\termhol}
\renewcommand{\termholtcnat}{\termholnat}

We can now define the desired relation between $\termholtc{f}$ and $\termholtcnat{f}$.
\begin{definition}\label{def:nat_encoded}
A type $\alphaty$ is \emph{encodable} if there are functions
$\natify\holhasty \alpha\purefun\natty$
and $\denatify\holhasty \natty\purefun \alphaty$
such that
$\denatify\app(\natify\app x)=x$ for all $x$.\footnote{In the formalisation, $\natify,\denatify$ are overloaded as part of a typeclass.}

We say that $n\holhasty\natty$ and $x\holhasty\alphaty$ are \emph{($\relnat$-)related}
if they represent the same value:
$\relnat\app n\app x \define n = \natify\app x$.
A number $n$ is \emph{well-encoded (with respect to~$\alphaty$)}
if $\relnat\app n\app x$ holds for some $x\holhasty\alphaty$.
We lift this relation to functions
and say that $f^\natty\holhasty \typnatify{\alphaty}$ and $f\holhasty\alphaty$ are \emph{$\relnat$-related}
if $\relnatify{\alphaty}\app\termholtcnat{f}\app\termholtc{f}$,
where $\typnatify{\alpha}, \relnatify{\alpha}$ are defined recursively:
\begin{gather*}
\typnatify{(\alphaty \purefun \betaty)} \define \typnatify{\alphaty} \purefun \typnatify{\betaty},\quad
\typnatify{\alphaty}  \define \natty,\qquad
\relnatify{(\alphaty \purefun \betaty)} \define \funrel{\relnatify{\alphaty}}{\relnatify{\betaty}},\quad
\relnatify{\alphaty}  \define \relnat.
\end{gather*}

\end{definition}
\begin{objective}\label{obj:holtctoholtcnat}
Given a $\holtc$ function $\termholtc{f}$,
synthesise a $\relnat$-related $\holtcnat$ term~$\termholtcnat{f}$.
Moreover, the synthesis preserves tail-recursiveness and first-order applications on well-encoded inputs.
\end{objective}
We achieve this in two steps:
First, we encode the datatypes used by $\termholtc{f}$ (\cref{sec:natify_datatypes}).
Second, we synthesise a $\holtcnat$ term $\termholtcnat{f}$ related to $\termholtc{f}$ (\cref{sec:natify_funs}).

\subsection{Encoding of Datatypes\label{sec:natify_datatypes}}

Let $\vec{\alphaty}\app d=C_1\app \vec{\alphaty_1}\mid\dotsb\mid C_n\app\vec{\alphaty_n}$ be an ADT
with $a_i\define \abs{\vec{\alphaty_i}}$.
We want to encode $\vec{\alphaty}\app d$ according to \cref{def:nat_encoded}.
Assuming that all $\alpha\in \vec{\alpha_1}\cup\dotsb\cup\vec{\alpha_n}$
are encoded or \makebox{$\alpha=\vec{\alpha}\app d$},
the existence of suitable en- and decoding functions is straightforward.
Following Gordon's approach~\cite{gordonHOLTypes} to encode the type of lists,
each value $C_i\app \vec{x}$ is encoded as a tagged and nested pair.
Datatypes are encoded recursively this way, using natural numbers as the base type:
\begin{definition}\label{def:natify_constructor}
Fix an injective pairing function $\pairnat\holhasty\natty^2\purefun\natty$
with inverses $\fstnat, \sndnat$, i.e.\
$\fstnat\app(\pairnat\app n\app m)=n$
and
$\sndnat\app(\pairnat\app n\app m)=m$.
We define \makebox{$\termholnat{C_i}\holhasty \natty^{a_i}\purefun \natty$} by
$\termholnat{C_i}\app\vec{x}\define
\pairnat\app i\app\parenths[\big]{
\holif{a_i=0}
{0}
{\parenths{\pairnat\app x_1\app (\pairnat\app x_2(\dotsb(\pairnat\app x_{a_i-1}\app x_{a_i})\dotsb}}}$.
For each $i,j$, we define the \emph{selector}
$\natselector{i}{j}\app x\define \parenths{\holif{j<i}{\fstnat}{\id}}\app(\sndnat^j\app x)$.
Using the selector,
we can define a companion $\casetypenat{d}$
for the case combinator $\casetype{d}$
using a nested if-then-else construction.
Details can be found in~\appendixref{sec:natify_datatypes_details}.
\end{definition}
These functions suffice to show that $\vec{\alphaty}\app t$ is encodable:
\begin{theorem}\label{thm:nat_encoded_datatype}
$\vec{\alphaty}\app d=C_1\app \vec{\alphaty_1}\mid\dotsb\mid C_n\app\vec{\alphaty_n}$ is encodable
if all $\alpha\in \vec{\alpha_1}\cup\dotsb\cup\vec{\alpha_n}$ are encoded or \makebox{$\alpha=\vec{\alpha}\app d$}.
Also, for all $1\leq i\leq n$, $1\leq j\leq a_i$ and relations $R$, we have
\begin{gather*}
\parenths{\funrel{\relnat^{a_i}}{\relnat}}\app \termholnat{C_i}\app C_i,\qquad\qquad
\app\relnat\app n\app(C_i\app \vec{x})\implies \relnat\app (\natselector{a_i}{j}{n})\app x_j,\\
\parenths[\big]{\funrel{\parenths{\funrel{\relnat^{a_1}}{R}}}{\funrel{\dotsb\relarrow\parenths{\funrel{\relnat^{a_n}}{R}}}{\funrel{\relnat}{R}}}}\app \casetypenat{d}\app \casetype{d}.
\end{gather*}
\end{theorem}
The described pipeline is fully automatic in our framework.
An example application can be found in~\appendixref{sec:natify_datatypes_details}.
We note that our underlying pairing function is the Cantor pairing function, which results in a number computable in time polynomial in the sizes of the two numbers in the encoded pair.

\subsection{Synthesis of $\holtcnat$ Functions}\label{sec:natify_funs}

Let $\termholtc{f}=\lambda\vec{x}.\app \termholtc{t}$
be a $\holtc$ function of type $\alphaty$.
We aim for \cref{obj:holtctoholtcnat}.
For this, we can assume that all ADTs in $\alpha$ are encoded
and every user-specified function $\termholtc{g}\neq \termholtc{f}$ in $\termholtc{t}$
is already compiled -- and thus $\relnat$-related to some $\termholtcnat{g}$.

Our approach combines \emph{black-box and white-box transports}~\cite{transport,transportcoq,huffmanTransfer}
in a novel way, to the best of our knowledge.
Both kinds of transport take a term $t\holhasty\alphaty$ and a relation $R$ as input
and synthesise a term $t'$ together with a proof that $R\app t'\app t$.
While black-box transports do so using only the structure of $\alphaty$ and $R$,
white-box transports also use the structure of $t$.

Our approach is as follows:
\begin{enumerate*}
\item Use black-box transport to obtain a function $\termholtcnat{f}$ that is $\relnat$-related to $\termholtc{f}$.
\item Use white-box transport to obtain a term $\whiteboxtransport{(\lambda\vec{x}.\app \termholtc{t})}$ that uses $\termholtcnat{f}$
  and which is also $\relnat$-related to $\termholtc{f}$.
\item Derive the recursive equation $\termholtcnat{f}=\whiteboxtransport{(\lambda\vec{x}.\app \termholtc{t})}$ using that $\relnat$ is left-unique.
\end{enumerate*}
The use of black-box transports relies on the fact that $\relnat$ can be used as a relation by \transport~\cite{transport}:
\begin{theorem}[Blackbox-Transport~\cite{transport}]\label{thm:blackbox}
$\relnat$, $\natify$, and $\denatify$ form a partial Galois equivalence.
Thus, there is some $\termholtcnat{f}$ that is $\relnat$-related to $\termholtc{f}$.
\end{theorem}
In the formalisation, we use the \transport prototype \cite{transport-AFP} to obtain $\termholtcnat{f}$.
Since the black-box transport disregards the structure of $t$,
\cref{thm:blackbox} may not preserve tail-recursiveness and first-order applications.
To derive an equation $\termholtcnat{f}=\lambda\vec{\termholtcnat{x}}.\app\termholtcnat{t}$
structurally related to the one of $\termholtc{f}$,
we use \makebox{white-box~transport}:
\begin{definition}\label{def:typerelnatsubst}
The \emph{white-box-transport} of $\lambda\vec{x}.\app \termholtc{t}$
replaces every subterm by its $\relnat$-related companion.
It is defined recursively by
\begin{gather*}
\whiteboxtransport{(\lambda \vec{x}.\app s)} \define \lambda \vec{\termholtcnat{x}}.\app \whiteboxtransport{s},\quad\hspace{0.5em}
\whiteboxtransport{(t_1\app t_2)} \define \whiteboxtransport{t_1}\app\whiteboxtransport{t_2},\quad\hspace{0.5em}
\whiteboxtransport{x_i} \define \termholtcnat{x_i},\quad\hspace{0.5em}
\whiteboxtransport{g} \define \termholtcnat{g},\quad\hspace{0.5em}
\whiteboxtransport{f} \define \termholtcnat{f}.
\end{gather*}
\end{definition}
\begin{theorem}[Whitebox-Transport~\cite{huffmanTransfer}]\label{thm:whitebox}
$\whiteboxtransport{(\lambda\vec{x}.\app \termholtc{t})}$ and $\lambda\vec{x}.\app \termholtc{t}$ are $\relnat$-related
and white-box transports preserve tail-recursiveness and first-order applications.
\end{theorem}
\begin{theorem}\label{thm:cond_eq_natify_funs}
If $\vec{\relnat\app \termholtcnat{x_i}\app x_i}$ then $\termholtcnat{f}\app\vec{\termholtcnat{x}}=\whiteboxtransport{t}$.
\end{theorem}
\begin{proof}
We have
$\relnat\app(\termholtcnat{f}\app\vec{\termholnat{x}})\app(\termholtc{f}\app\vec{\termholtc{x}})$
and
$\relnat\app(\whiteboxtransport{t})\app(\termholtc{f}\app\vec{\termholtc{x}})$
by \cref{thm:blackbox,thm:whitebox}.
Moreover, $\relnat$ is left-unique by~\cref{def:nat_encoded}.
Thus $\termholtcnat{f}\app\vec{\termholnat{x}}=\whiteboxtransport{t}$.
\end{proof}
\cref{thm:blackbox,thm:whitebox,thm:cond_eq_natify_funs} show that we have achieved~\cref{obj:holtctoholtcnat}.
The described synthesis is fully automatic in our framework. An example is shown in \cref{fig:holtctoholtcnat_steps}.
\begin{figure}[t]
\begin{subfigure}[t]{0.5\linewidth}
\lstset{caption={Input function from the user.},label={lst:holtctoholtcnat_steps1}}
\begin{lstlisting}[language=isabelle]
count [*a*] [*xs*] [*n*] = case [*xs*] of [] $\Rightarrow$ [*n*]
| [**x**] # [**xs**] $\Rightarrow$ count [*a*] [**xs**]
    (if [**x**] = [*a*] then Suc [*n*] else [*n*])
\end{lstlisting}
\end{subfigure}
\begin{subfigure}[t]{0.5\linewidth}
\lstset{caption={Constant {\ttfamily count}$^\natty$ and relatedness theorem from black-box transport.},
label={lst:holtctoholtcnat_steps2}
}
\begin{lstlisting}[language=isabelle]
($\funrel{\relnat}{\funrel{\relnat}{\funrel{\relnat}{{\relnat}}}}$) count$^\natty$ count
\end{lstlisting}
\end{subfigure}
\begin{subfigure}[t]{0.5\linewidth}
\lstset{caption=White-box transport theorem.,label={lst:holtctoholtcnat_steps3}}
\begin{lstlisting}[language=isabelle]
($\funrel{\relnat}{\funrel{\relnat}{\funrel{\relnat}{{\relnat}}}}$)
($\lambda$[**a$^{\color{isabvar}\natty}$**] [**xs$^{\color{isabvar}\natty}$**] [**n$^{\color{isabvar}\natty}$**].  case$^\natty$ [**xs$^{\color{isabvar}\natty}$**] of []$^\natty$ $\Rightarrow$ [**n$^{\color{isabvar}\natty}$**]
| [**x$^{\color{isabvar}\natty}$**] #$^\natty$ [**xs$^{\color{isabvar}\natty}$**] $\Rightarrow$ count$^\natty$ [**a$^{\color{isabvar}\natty}$**] [**xs$^{\color{isabvar}\natty}$**]
  (if [**x$^{\color{isabvar}\natty}$**] = [**a$^{\color{isabvar}\natty}$**] then Suc$^\natty$ [**n$^{\color{isabvar}\natty}$**] else [**n$^{\color{isabvar}\natty}$**]))
count
\end{lstlisting}
\end{subfigure}
\begin{subfigure}[t]{0.5\linewidth}
\lstset{caption={Final tail-recursive equation}.,label={lst:holtctoholtcnat_steps4}}
\begin{lstlisting}[language=isabelle]
$\relnat$ [*a$^{\color{isafvar}\natty}$*] [*a*] $\land$ $\relnat$ [*xs$^{\color{isafvar}\natty}$*] [*xs*] $\land$ $\relnat$ [*n$^{\color{isafvar}\natty}$*] [*n*]$\implies$
count$^\natty$ [*a$^{\color{isafvar}\natty}$*] [*xs$^{\color{isafvar}\natty}$*] [*n$^{\color{isafvar}\natty}$*] = case$^\natty$ [*xs$^{\color{isafvar}\natty}$*] of
  []$^\natty$ $\Rightarrow$ [*n$^{\color{isafvar}\natty}$*]
| [**x$^{\color{isabvar}\natty}$**] #$^\natty$ [**xs$^{\color{isabvar}\natty}$**] $\Rightarrow$ count$^\natty$ [*a$^{\color{isafvar}\natty}$*] [**xs$^{\color{isabvar}\natty}$**]
  (if [**x$^{\color{isabvar}\natty}$**] = [*a$^{\color{isafvar}\natty}$*] then Suc$^\natty$ [*n$^{\color{isafvar}\natty}$*] else [*n$^{\color{isafvar}\natty}$*])
\end{lstlisting}
\end{subfigure}
\caption{%
For the user-specified function\protect\footnotemark\ \lstinline{count} in \cref{lst:holtctoholtcnat_steps1},
we obtain a related constant {\ttfamily count}$^\natty$ in \cref{lst:holtctoholtcnat_steps2}
using~\cref{thm:blackbox}.
Using \cref{thm:whitebox}, we obtain a second term related to {\ttfamily count} in \cref{lst:holtctoholtcnat_steps3}.
Using that $\relnat$ is left-unique (cf.\ \cref{thm:cond_eq_natify_funs}),
we derive the desired equation for
{\ttfamily count}$^\natty$ in \cref{lst:holtctoholtcnat_steps4}.
}\label{fig:holtctoholtcnat_steps}
\end{figure}
\footnotetext{%
There is a commmand in \isabellehol that turns functions specified by multiple equations
into an equivalent definition consisting of only one equation.}

\renewcommand{\holtc}{\oldholtc}
\renewcommand{\holtcnat}{\oldholtcnat}
\renewcommand{\termholtc}{\oldtermholtc}
\renewcommand{\termholtcnat}{\oldtermholtcnat}

\section{$\foltcnat$ to $\imptc$}\label{sec:holtcnattoimptc}

We next describe the deep embedding from $\foltcnat$ to $\imptc$.
\cref{fig:imptc_rules} shows the $\imptc$-specific commands and their semantics,
extending those in \cref{fig:exec_imp}.
Notably, $\imptc$ supports calls of $\impw$ programs (defined in \cref{sec:impctoimpw}) and tail-recursion.
It also supports if-then-elses but no case combinators.
Since every $\casetypenat{d}$ is defined as a nested if-then-else construction
(\cref{def:natify_constructor}),
this is sufficient.
\begin{figure}[t]
\begin{mathpar}
  \inferrule*[left=\callN]
  {\bigstepTimePart{pc}{s}{n}{r}{v} \\
   s'=\funupd{s}{r}{v}}
  {\bigstepCtxtTime{p}{\call{pc}{r}}{s}{n}{s'}}[\callN]\label{infrule:call}

  \inferrule*[left=\tailN]
  {\bigstepCtxtTime{p}{p}{s}{n}{s'}}
  {\bigstepCtxtTime{p}{\tail}{s}{n+\const}{s'}}[\tailN]\label{infrule:tail}
\end{mathpar}
\caption{Execution relation of $\imptc$-specific commands.}\label{fig:imptc_rules}
\end{figure}

To relate a $\foltcnat$ function $\termfoltcnat{f}$, with its conditional (recursive) equation
$\vec{\relnat\app \termfolnat{x_i}\app x_i} \implies \termfoltcnat{f}\app \vec{\termfolnat{x}} = \termfoltcnat{t}$,
to an $\imptc$ program $p$,
we presume injective functions $\htoiargconst,\htoiretconst$ such that
\begin{enumerate*}
\item $\htoiargi{f}{i}$ returns a unique name for the $i$-th argument of $f$ and
\item $\htoiret{f}$ returns a unique name for the result computed by~$f$.
\end{enumerate*}

We focus on total, functional correctness of~$p$ in this work
(proving concrete time bounds for $p$ is left as future work, see~\cref{sec:concl}):
From any state $s$ with well-encoded inputs $\vec{\stateretr{s}{\htoiargi{f}{i}}}$,
the program $p$ terminates and the value in return register $\htoiret{f}$
is equal to $\termfoltcnat{f}\app \vec{\stateretr{s}{\htoiargi{f}{i}}}$.
Formally, our objective is:
\begin{objective}\label{obj:holtcnattoimptc}
Given a $\foltcnat$ function $\termfoltcnat{f}$,
compile it to an $\imptc$ program $p$ such that
$\bigstepCtxtPart{p}{p}{s}{\htoiret{f}}{\termfoltcnat{f}\app \vec{\stateretr{s}{\htoiargi{f}{i}}}}$
whenever $\vec{\relnat\app (\stateretr{s}{\htoiargi{f}{i}})\app x_i}$.
\end{objective}
The deep embedding has two steps:
\begin{enumerate*}
\item Compile $\termfoltcnat{f}$ into an $\imptc$ program $p$ (\cref{sec:holtcnattoimptc_compiler}) and
\item prove the equivalence between $\termfoltcnat{f}$ and $p$ using custom-built automation (\cref{sec:holtcnattoimptc_correctness}).
The automation uses symbolic execution,
normalises the (otherwise incomprehensible) program state such that it becomes amenable to automatic proof,
and discharges well-encodedness side-conditions.
\end{enumerate*}

\paragraph{Note on Higher-Order Functions}
Since $\imptc$ does not support higher-order functions,
we assume first-order functions as input for our framework.
Nevertheless, each first-order instance of a higher-order function can be compiled systematically with our framework.
We demonstrate this by means of an example.
More cases can be found in the formalisation:
\begin{example}\label{ex:compile_higher_order}
Consider the iteration function
of type \makebox{$(\alphaty\purefun\alphaty)\purefun\natty\purefun\alphaty\purefun\alphaty$}:
\begin{gather}\label{eq:fun_pow}
f^n\app x\define \caseconst\app n\app\ofconst\app 0 \Rightarrow x\mid n+1\Rightarrow f^n\app(f\app x).
\end{gather}
This function cannot be compiled as it is.
However, every first-order instance,
e.g.\ $\sndnat^j$ as used in the definition of $\natselector{i}{j}$
(\cref{def:natify_constructor}),
can be compiled.
To do so, we define $\powsndnat\app n\define \sndnat^n$.
We then instantiate $\cref{eq:fun_pow}$ with $f=\app \sndnat$
and fold the definition of $\powsndnat$ to obtain
$\powsndnat\app n\app x=\caseconst\app n\app\ofconst\app 0 \Rightarrow x\mid \allowbreak n+1\Rightarrow \powsndnat\app n\app(\sndnat\app x)$.
This equation can then be compiled with our framework.
\end{example}

\subsection{Compilation to $\imptc$}\label{sec:holtcnattoimptc_compiler}

We compile $\termfoltcnat{f}\app \vec{\termfolnat{x}} = \termfoltcnat{t}$
to an $\imptc$ program $p$ in two steps.
For ease of notation, we drop the superscript and just write $f$ and $t$ below.

First, we
parse $t$ into a metaprogram datatype resembling $\foltcnat$.\footnote{%
While the datatype allows recursive calls in non-tail positions, the compiler rejects such terms using a syntactic check.}
Second, we compile the parsed term to an $\imptc$ program, as shown in \cref{fig:holtcnat_compiler}.
\begin{figure}[t]
\begin{gather*}
\begin{aligned}
\dens{\holnIf{t_1}{t_2}{t_3}}{b}{r} \define &\; \dens{t_1}{b}{x} \seqi \iif{x}{\dens{t_2}{b}{r}}{\dens{t_3}{b}{r}} & \text{(fresh $x$)} \\
\dens{\holnLet{t_1}{t_2}}{b}{r} \define &\; \dens{t_1}{b}{x} \seqi \dens{t_2}{x \# b}{r} & \text{(fresh $x$)} \\
\dens{\holnLetBound{n}}{b}{r} \define &\; \assign{r}{b\app {!}\app n} & \\
\dens{\holnArg{n}}{b}{r} \define &\; \assign{r}{\htoiargi{f}{n}} & \\
\dens{\holnNum{n}}{b}{r} \define &\; \assign{r}{n} & \\
\dens{\holnCall{g}{t_1, \dotsc, t_m}}{b}{r} \define &\; \dens{t_1}{b}{x_1} \seqi {\dotsc} \seqi \dens{t_m}{b}{x_m} \seqi & \text{(fresh $x_1, \dotsc, x_m$)}  \\
    &\; \assign{\htoiargi{g}{1}}{x_1} \seqi \dotsc \seqi \assign{\htoiargi{g}{m}}{x_m} \seqi & \\
    &\; \call{g^{\imp}}{\htoiret{g}} \seqi  \assign{r}{\htoiret{g}} &\text{($g^{\imp}$ registered for $g$)}\\
\dens{\holnTailCall{t_1, \dotsc, t_k}}{b}{r} \define &\; \dens{t_1}{b}{x_1} \seqi {\dotsc} \seqi \dens{t_k}{b}{x_k} & \text{(fresh $x_1, \dotsc, x_k$)} \\
    &\; \assign{\htoiargi{f}{1}}{x_1} \seqi \dotsc \seqi \assign{\htoiargi{f}{k}}{x_k} \seqi \tail\hspace{-1em} &
\end{aligned}
\end{gather*}
\caption{The compiler from the $\foltcnat$ representation to $\imptc$.
Term $t_1$ in $\holnIf{t_1}{t_2}{t_3}$ denotes the condition $t_1\neq 0$,
$\holnLet{t_1}{t_2}$ binds $t_1$ to the first de Bruijn index in $t_2$,
$\holnLetBound{i}$ denotes the variable bound by the $i$-th enclosing \textbf{Let}, and
$\holnArg{i}$ the $i$-th argument of $f$, i.e.\ $x_i$.
}\label{fig:holtcnat_compiler}
\end{figure}
We denote the compilation by $\dens{t}{b}{r}$,
where $b$ is a list of registers holding values of terms bound by enclosing $\holnLetC{}$ bindings and
$r$ is the register that will hold the program's result.

For a call $\holnCall{g}{t_1, \dotsc , t_m}$,
the compiler retrieves the registered $\impw$ implementation $g^{\imp}$ of $g$.
For the primitives equality, addition, and subtraction,
manual implementations must be provided to the compiler.
The compiler generates fresh register names when needed, namely
to compile $\holnLetC{}$ bindings,
to store the value of $\holnIfC{}$ conditions,
and to save function arguments in temporary registers to prevent them from being overwritten.
For example, in $\holnCall{g}{t_1, \dotsc , t_m}$,
argument registers could be overwritten if another call of $g$ appears within $t_1, \dotsc , t_m$.

In summary, we parse $t$ into a term $t^\prime$ of the metaprogram datatype
and compile $t'$ to the $\imptc$ program $p\define \dens{t^\prime}{[]}{\htoiret{f}}$.

\subsection{Correctness Proofs}\label{sec:holtcnattoimptc_correctness}
Since the compiler to $\imptc$ is a metaprogram,
its correctness cannot be proven in \isabellehol.
Instead, the correctness of each compiled program $p$ must be verified individually through automation
(a process sometimes called ``certified extraction''), which we describe next.

First, $p$ is normalised such that no recursive constructor (sequences and if-then-elses) appears on the left of a sequence.
This simplifies the implementation, but shall not concern us any further
(for details see~\appendixref{sec:holtcnattoimptc_correctness_details}).
The automation proceed as follows:
\begin{enumerate}
  \item If $f$ is recursive, start an induction proof on $\vec{x}$ using the induction rule of $f$.
\item Symbolically run $p$ to completion or until a recursive call is encountered.
Let $s'$ be the end state.
\item Run $\termfoltcnat{f}$ to completion with result $r$ or until a recursive call $\termfoltcnat{f}\app \vec{y}$ is encountered.
\item If both computations ran to completion, prove $\stateretr{s'}{\htoiret{f}}=r$.
Otherwise, prove $\vec{\stateretr{s'}{\htoiargi{f}{i}}}=\vec{y}$ and conclude using the inductive hypothesis.
\end{enumerate}
\begin{figure}[t]
\begin{mathpar}
\inferrule*[left=Update]
  {\statenorm{\aval{a}{s}}{v}}
  {\statenorm{\funupd{s}{r}{\aval{a}{s}}}{\funupd{s}{r}{v}}}

\inferrule*[left=Const]
  {$ $}
  {\statenorm{\aval{\aconst{n}}{s}}{n}}

\inferrule*[left=Reg]
  {\statenorm{\stateretr{s}{r}}{v}}
  {\statenorm{\aval{\avar{r}}{s}}{v}}

\inferrule*[left=Hit]
  {r=r'}
  {\statenorm{\stateretr{\funupd{s}{r}{v}}{r'}}{v}}\hspace{0.8em}
\inferrule*[left=Cont]
  {r\neq r'\\\statenorm{\stateretr{s}{r'}}{v'}}
  {\statenorm{\stateretr{\funupd{s}{r}{v}}{r'}}{v'}}\hspace{0.8em}
\inferrule*[left=Arith]
  {\statenorm{\aval{A_1}{s}}{v_1}\\
  \statenorm{\aval{A_2}{s}}{v_2}}
  {\statenorm{\aval{A_1\otimes A_2}{s}}{v_1\otimes v_2}}
\end{mathpar}
\caption{State normalisation rules.}\label{fig:statenorm}
\end{figure}
To run $p$, we use the execution rules from \cref{fig:exec_imp,fig:imptc_rules}
(specialised to single return registers)\footnote{The specialisation is routine and can be found in \cref{sec:holtcnattoimptc_correctness_details}.}
and the correctness theorems of subprograms.
Each update to the $\imptc$ state is normalised with the rules from \cref{fig:statenorm} to
\begin{enumerate*}
\item to simplify the proof state for interactive proofs and
\item to speed up the automation.
\end{enumerate*}
The automation eliminates cases where $p$ and $\termfoltcnat{f}$ took different execution branches.

Since our correctness theorems are conditional,
the inductive hypothesis and the subprograms' correctness theorems are conditional too.
Whenever they are used, goals of the form
$\relnat\app \termholnat{x}\app x$ must be solved.
This is done automatically by a resolution-based method that
uses $\cref{def:relation}$,
the relatedness theorems of ADTs and functions (\cref{thm:nat_encoded_datatype,thm:blackbox}),
and local assumptions.
Note that this is essentially a form of type-checking,
making sure that all involved terms are well-encoded.
\begin{example}\label{ex:tc_to_imptc}
We continue with the example from \cref{fig:holtctoholtcnat_steps}
and show the correctness of {\ttfamily count$^{\natty}$}.
The goal states are shown in \cref{fig:holtcnattoimptc_correctness_steps}.
\begin{figure}[t]
\begin{subfigure}[t]{\linewidth}
\lstset{caption={Initial goal.}}
\begin{lstlisting}[language=isabelle]
$\relnat$ ([*s*]  "a") [*a*] $\land$ $\relnat$ ([*s*] "xs") [*xs*] $\land$ $\relnat$ ([*s*] "n") [*n*] $\implies$
$\vdash$ (count$^{\imp}$, [*s*]) $\purefun_{ret}$ (count$^{\natty}$ ([*s*] "a") ([*s*] "xs") ([*s*] "n"))
\end{lstlisting}
\end{subfigure}
\begin{subfigure}[t]{\linewidth}
\lstset{caption={\ttfamily (\#)}-case after induction setup.}
\begin{lstlisting}[language=isabelle]
$\relnat$ ([*s*]  "a") [*a*] $\land$ $\relnat$ ([*s*] "xs") ([*x*] # [*xs*]) $\land$ $\relnat$ ([*s*] "n") [*n*] $\implies$ (*$\color{isabtextcomment}\relnat$-assms.*)
($\forall{}$[**s'**]. $\relnat$ ([**s'**] "a") [*a*] $\land$ $\relnat$ ([**s'**] "xs") [*xs*] $\qquad\qquad$ (*inductive hypothesis*)
  $\land$ $\relnat$ ([**s'**] "n") (if [*x*] = [*a*] then Suc [*n*] else [*n*]) $\longrightarrow$
  $\vdash$ (count$^{\imp}$, [**s'**]) $\purefun_{ret}$ (count${}^{\natty}$ ([**s'**] "a") ([**s'**] "xs") ([**s'**] "n"))) $\implies$
$\vdash$ (count$^{\imp}$, [*s*]) $\purefun_{ret}$ (count$^{\natty}$ ([*s*] "a") ([*s*] "xs") ([*s*] "n"))
\end{lstlisting}
\end{subfigure}
\begin{subfigure}[t]{\linewidth}
\lstset{caption=Subgoal after symbolic execution (repeated premises abbreviated as {\ttfamily <...>}).}
\begin{lstlisting}[language=isabelle]
<$\relnat$-assumptions> $\land$ <inductive hypothesis> $\land$ [*x*] = [*a*] $\implies$
$\vdash$ (count$^{\imp}$, [*s'*]) $\purefun_{ret}$ (count$^{\natty}$ ([*s*] "a") (select$_{2,2}$ ([*s*] "xs")) (Suc ([*s*] "n")))
\end{lstlisting}
\end{subfigure}
\begin{subfigure}[t]{\linewidth}
\lstset{caption={Goals after application of inductive hypothesis}.}
\begin{lstlisting}[language=isabelle]
$\relnat$ ([*s*] "a") [*a*] $\land$ $\relnat$ (select$_{2,2}$ ([*s*] "xs")) [*xs*] $\land$ $\relnat$ (Suc ([*s*] "n")) (Suc [*n*])
\end{lstlisting}
\end{subfigure}
\caption{Step-by-step correctness proof for {\ttfamily count$^{\natty}$}, described in \cref{ex:tc_to_imptc}.}\label{fig:holtcnattoimptc_correctness_steps}
\end{figure}
First, the induction rule of \lstinline{count} is applied,
creating cases for \lstinline{[]} and \lstinline{(#)}.
We focus on the latter.
Running \lstinline{count$^{\imp}$} and \lstinline{count$^{\natty}$}
produces two subcases: one for \lstinline{[*x*] $=$ [*a*]} and one for \lstinline{[*x*] $\neq$ [*a*]}.
We consider the former.
The new state \lstinline{[*s'*] = [*s*](r1 := x1,...)}
is obtained from \lstinline{[*s*]} by normalised state updates.
Due to normalisation, we immediately get:
\begin{lstlisting}[language=isabelle]
[*s'*] "a" = [*s*] "a" $\land$ [*s'*] "xs" = select$_{2,2}$([*s*] "xs") $\land$ [*s'*] "n" = Suc([*s*] "n")
\end{lstlisting}
Thus, the inductive hypothesis can be applied, leaving a set of relatedness goals.
These are solved automatically,
using the local assumptions,
the selector relatedness theorems for lists,
and the relatedness theorem for \lstinline{Suc}.
\end{example}

\paragraph{Case Studies}
We applied our framework to typical functional programs,
including standard datatype functions ($\map$, $\fold$,\dotso),
the bisection method for square roots,
and problems from complexity theory --
such as reductions from \sat to \threesat and \threesat to independent sets.
The examples can be found in the formalisation.
All proofs were automatic, but functions using hundreds of registers
had to be split due to inefficient retrievals for the tracked $\imptc$ state in the current implementation.
Performance improvements are future work.
We conjecture that our method is complete when applied to programs compiled from $\foltc$ using our framework,
but a formal proof is out of scope.

\section{$\imptc$ to $\impminus$}

In this section, we describe the verified chain of compilers from $\imptc$ to $\impminus$:
\begin{objective}\label{obj:imptctoimpminus}
For every $\imptc$ program $p$,
compile an $\impminus$ program $p'$ such that (up to encoding and decoding), for the same input,
\begin{enumerate*}
\item $p$ and $p'$ compute the same output and
\item the number of steps $p'$ takes is at most polynomially more than the number of steps $p$ takes.
\end{enumerate*}


\end{objective}
Note that since the size of the values stored in registers is governed by \cref{thm:spaceConsume}, if the above objective is satisfied,
the memory used by the final $\impminus$ program is polynomial in the size of $p$ and its running time.

The compilation proceeds in several steps:
First, we compile recursive calls into while-loops of an intermediate language $\impc$ (\cref{sec:imptctoimpc}),
Second, we eliminate calls to other programs (\cref{sec:impctoimpw}).
Finally, we use bit blasting to compile into our final language $\impminus$ (\cref{sec:impwtoimpminus}).

\subsection{$\imptc$ to $\impc$}\label{sec:imptctoimpc}
In the first step, we replace tail-recursive calls by while-loops,
whose semantics are given in \cref{fig:whilerules}.
\begin{figure}[t]
  \begin{mathpar}
    \inferrule*[left=\whileFalseN]
    {\stateretr{s}{r} = 0}
    {\bigstepTime{\while{r}{p}}{s}{\const}{s}}
    [\whileFalseN]\label{infrule:whileFalse}\hspace{0.3em}
  \inferrule*[left=\whileTrueN]
    {\stateretr{s}{r} \neq 0\quad
     \bigstepTime{p}{s}{n_1}{s'}\quad 
     \bigstepTime{\while{r}{p}}{s'}{n_2}{s''}}
    {\bigstepTime{\while{r}{p}}{s}{n_1+n_2+\const}{s''}}
    [\whileTrueN]\label{inferrule:whileTrue}
  \end{mathpar}
  \caption{Execution relation for $\whileop$.}\label{fig:whilerules}
\end{figure}
In rule \nameref{inferrule:whileTrue},
the loop-condition $r\ \impconstrfont{\neq 0}$ is checked
\emph{before} stepping into the $\whileop$.
In contrast, tail-recursive calls occur at the \emph{end} of the program.
To bridge this gap, we prove the following rules for $\imptc$.
\begin{lemma}\label{lem:tailWhileRules}
Let $\bigstepTime{p}{s}{n}{(s',b)}$
denote the modified execution relation shown in \cref{fig:exectailcallflag} that runs
$p$ but stops and flags whenever a tail-recursive call is encountered using \makebox{$b\in\{\false,\true\}$}
\begin{figure}[t]
\begin{mathpar}
  \inferrule*
  {\stateretr{s}{r} \neq 0 \\ \bigstepTime{p_1}{s}{n}{(s',b)}}
  {\bigstepTime{\iif{r}{p_1}{p_2}}{s}{n+\const}{(s',b)}}

  \inferrule*
  {\bigstepTime{p_1}{s}{n_1}{(s', \false)}\\
  \bigstepTime{p_2}{s'}{n_2}{(s'', b)}}
  {\bigstepTime{\seq{p_1}{p_2}}{s}{n_1+n_2+\const}{(s'',b)}}

  \inferrule*
  {$ $}
  {\bigstepTime{\tail}{s}{\const}{(s,\true)}}

  \inferrule*
  {s'=\funupd{s}{r}{\aval{a}{s}}}
  {\bigstepTime{\assign{r}{a}}{s}{\const}{(s',\false)}}

  \raisebox{1em}{\dots}
\end{mathpar}
\caption{Execution relation from \cref{lem:tailWhileRules} that halts with a flag when encountering a tail-recursive call. Omitted rules are routine and listed in
\appendixref{sec:imptctoimpc_details}.}\label{fig:exectailcallflag}
\end{figure}
Then the following rules are admissible:
\begin{mathpar}
\inferrule*
  {\bigstepTime{p}{s}{n}{(s',\false)}}
  {\bigstepCtxtTime{tp}{p}{s}{n}{s'}}

\inferrule*
  {\bigstepTime{p}{s_1}{n_1}{(s_2,\true)}\\
   \bigstepCtxtTime{tp}{tp}{s_2}{n_2}{s_3}}
  {\bigstepCtxtTime{tp}{p}{s_1}{n_1+n_2}{s_3}}
\end{mathpar}
\end{lemma}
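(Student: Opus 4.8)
We have two execution relations. The standard one, $\bigstepCtxtTime{tp}{p}{s}{n}{s'}$, handles tail-recursion via rule \nameref{infrule:tail}: encountering $\tail$ re-runs the context program $tp$ from scratch. The modified one, $\bigstepTime{p}{s}{n}{(s',b)}$ from \cref{fig:exectailcallflag}, instead *halts* at the first tail-call and records this in the flag $b$, returning the intermediate state. The lemma claims two admissible rules connecting them: if the flagged execution finishes with $\false$ (no tail-call reached), it coincides with an ordinary execution; and if it finishes with $\true$ at state $s_2$, then the ordinary execution of $p$ in context $tp$ equals the flagged run of $p$ followed by an ordinary contextual run of $tp$ from $s_2$.

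**The plan.** I would prove both admissible rules simultaneously by a single rule induction on the derivation of the flagged relation $\bigstepTime{p}{s}{n}{(s',b)}$, case-splitting on $b$ at the leaves. Concretely, I would prove the combined statement
\begin{gather*}
\bigstepTime{p}{s}{n}{(s',\false)} \implies \forall tp.\; \bigstepCtxtTime{tp}{p}{s}{n}{s'}, \\
\bigstepTime{p}{s}{n}{(s',\true)} \implies \forall tp\, s''\, n'.\; \bigstepCtxtTime{tp}{tp}{s'}{n'}{s''} \implies \bigstepCtxtTime{tp}{p}{s}{n+n'}{s''}.
\end{gather*}
The second clause packages the desired conclusion so that the induction hypothesis can be threaded through sequential composition. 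Each flagged rule in \cref{fig:exectailcallflag} is then matched against the corresponding ordinary rule, and I would check that the timing annotations line up exactly.

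**Key cases.** The base cases are the simple commands (\assignN, and the omitted atomic rules): here $b=\false$ always, and the flagged and ordinary rules are literally identical, so admissibility is immediate. The $\tail$ rule is where $b=\true$ is produced: the flagged relation gives $\bigstepTime{\tail}{s}{\const}{(s,\true)}$, and I must show that feeding any contextual run $\bigstepCtxtTime{tp}{tp}{s}{n'}{s''}$ yields $\bigstepCtxtTime{tp}{\tail}{s}{\const+n'}{s''}$ — which is exactly rule \nameref{infrule:tail} reading $tp$ as the recursion body, matching the step count $n'+\const$. The interesting inductive case is sequencing: the flagged rule forces the left component $p_1$ to finish with $\false$, so by the $\false$-clause of the hypothesis $p_1$ runs ordinarily to $s'$; then $p_2$ carries the final flag $b$, and I apply whichever clause of the hypothesis matches $b$ to $p_2$, finally reassembling with the ordinary \nameref{infrule:seq} rule. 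The conditional case is routine in the same spirit, branching on the register value and passing the flag through the taken branch.

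**Main obstacle.** The delicate point is the sequencing case when the overall flag is $\true$: one must verify that the flagged semantics genuinely *stops* at the first tail-call, so that $p_1$ cannot itself contain the flagging tail-call — this is guaranteed by the side-condition $\false$ on $p_1$ in the flagged \seqN-rule, but it is the crux that makes the decomposition $n+n'$ well-defined and must be invoked carefully. I would also need to confirm that the constant-time bookkeeping ($\const$ summands) in \cref{fig:exectailcallflag} is set up to agree with the constants in \cref{fig:exec_imp,fig:imptc_rules} so the step counts in the two admissible rules come out exactly as stated, with no off-by-a-constant discrepancy; I expect this to be the only place requiring genuine attention rather than mechanical rule-matching.
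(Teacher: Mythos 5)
Your proposal is correct and matches the paper's proof, which is simply stated as ``by induction over the execution rules'': you perform rule induction on the derivation of the flagged relation from \cref{fig:exectailcallflag}, generalising the $\true$-case over the contextual continuation of $tp$, and discharge each case against the corresponding rule of \cref{fig:exec_imp,fig:imptc_rules}. Your two points of ``genuine attention'' are also well placed --- the $\false$-constraint on $p_1$ in the flagged sequencing rule is exactly what makes the decomposition go through, and the only constant bookkeeping subtlety (the flagged $\callN$ rule carrying $n+\const$ versus $n$ in \cref{fig:imptc_rules}) is absorbed by the monotonicity of the $\bigsteparrow_r$ premise.
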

Note the similarity between these rules and those for \nameref{inferrule:whileTrue} and \nameref{infrule:whileFalse}.
These rules suggest a straightforward translation from $p$ to an $\impc$ program
by using a loop that continues as long as a recursive call would have occurred:
\begin{theorem}\label{thm:comp_t_correct}
Let $\impconstrfont{cnt}$ be a fresh register with respect to $\vars\app p$.
Let $p\psubst{t}{x}$ denote the syntactic substitution of any occurrence of $x$ by $t$ in a program $p$.
Define
\begin{gather*}
\imptctoimpc{p} \define
\seq{\assign{\impconstrfont{cnt}}{\impconstrfont{1}}}{\while{\impconstrfont{cnt}}{\seq{\assign{\impconstrfont{cnt}}{\impconstrfont{0}}}{p\psubst{(\assign{\impconstrfont{cnt}}{\impconstrfont{1}})}{\tail}}}}.
\end{gather*}
Then it holds that
$\bigstepCtxtTimePart{p}{p}{s}{n}{\vars{p}}{s'}$ \textiff $\bigstepTimePart{\imptctoimpc{p}}{s}{n+\const}{\vars{p}}{s'}$.
\end{theorem}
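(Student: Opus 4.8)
The plan is to route both directions through the flagged execution relation of \cref{lem:tailWhileRules}, which lets us decompose a context run of $p$ into a sequence of single passes that each either terminate or stop at a tail call. It suffices to prove the sharper statement that the context run of $p$ and the run of $\imptctoimpc{p}$ terminate together, their final states agree on every register in $\vars{p}$, and their exact step counts differ by a single fixed constant $\const_0$; the $\le$-bounds in the two partial relations then match by taking $\const \define \const_0$. Writing $p' \define p\psubst{(\assign{\impconstrfont{cnt}}{\impconstrfont{1}})}{\tail}$ for the substituted body and $B \define \seq{\assign{\impconstrfont{cnt}}{\impconstrfont{0}}}{p'}$ for the loop body, so that $\imptctoimpc{p} = \seq{\assign{\impconstrfont{cnt}}{\impconstrfont{1}}}{\while{\impconstrfont{cnt}}{B}}$, the guiding idea is that one execution of $B$ simulates exactly one flagged pass of $p$: resetting $\impconstrfont{cnt}$ to $\impconstrfont{0}$ records ``no tail call yet'', and the assignment $\assign{\impconstrfont{cnt}}{\impconstrfont{1}}$ that replaced each $\tail$ raises the flag precisely where the flagged relation stops with $\true$. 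The loop re-enters $B$ iff the pass ended in a tail call, mirroring the recursive unfolding of \cref{lem:tailWhileRules}.

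The technical core is a one-pass simulation lemma, proved by rule induction on the flagged derivation $\bigstepTime{p}{s}{n}{(s',b)}$ (equivalently, by structural induction on $p$, casing on whether the head command is $\tail$). Assuming $\impconstrfont{cnt}\notin\vars{p}$, it states that running $p'$ from $s$ terminates in a state $t$ with $t\app r = s'\app r$ for all $r\in\vars{p}$, with $t\app\impconstrfont{cnt}=1$ exactly when $b=\true$ and $t\app\impconstrfont{cnt}=s\app\impconstrfont{cnt}$ when $b=\false$. Freshness of $\impconstrfont{cnt}$ is what makes this clean: since $\impconstrfont{cnt}$ occurs in no atom, expression, or condition of $p$, the two runs agree on every register of $p$ and $p'$ alters $\impconstrfont{cnt}$ only through the assignments that replaced $\tail$. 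The \assignN{}, \seqN{}, and the two if-cases follow directly from the induction hypotheses; the sole interesting case is $\tail$ against $\assign{\impconstrfont{cnt}}{\impconstrfont{1}}$, where the flag matches by construction.

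Both directions then follow by iterating this lemma. For the forward direction, \cref{lem:tailWhileRules} expands a terminating context run of $p$ into passes $s=s_0,s_1,\dots,s_k,s_{k+1}=s'$, the first $k$ stopping with $\true$ and the last with $\false$; the one-pass lemma turns each into an execution of $B$ that leaves $\impconstrfont{cnt}=1$ (so the guard fires again) for the first $k$ and $\impconstrfont{cnt}=0$ (so the loop exits) at the end, and splicing these with the \whileTrueN{} and \whileFalseN{} rules produces the claimed run of $\imptctoimpc{p}$ agreeing with $s'$ on $\vars{p}$. The converse uses determinism of $\impc$: a terminating run of $\imptctoimpc{p}$ performs a unique finite number of iterations, each of which the one-pass lemma read right-to-left identifies with a flagged pass of $p$, and reassembling these passes with the admissible rules of \cref{lem:tailWhileRules} reconstructs the context run of $p$. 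In particular the two runs are simultaneously terminating.

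The main obstacle is the step count rather than the functional correspondence. Each loop iteration incurs bookkeeping absent from the context run -- the \whileTrueN{} guard, the reset $\assign{\impconstrfont{cnt}}{\impconstrfont{0}}$, and the extra sequencing -- while saving the cost of the \tailN{} rule, since $\tail$ becomes a single assignment. For the bound $n+\const$ to hold, these per-pass contributions must cancel, so that the difference of the two running times telescopes to a constant independent of the number $k$ of unfoldings. This is exactly where the freedom in the rule constants is spent: choosing the \tailN{} constant to equal the combined per-iteration overhead of \whileTrueN{}, the reset assignment, and the body's sequencing makes every recursive pass contribute zero net surplus, so that only the fixed cost of the initial $\assign{\impconstrfont{cnt}}{\impconstrfont{1}}$, the outer sequencing, the final terminating pass (which has no \tailN{} to offset), and the closing \whileFalseN{} survives; their sum is the constant $\const_0$. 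Carrying this accounting uniformly across all passes and packaging it into the $\le$-bounds of the partial relation is the delicate part.
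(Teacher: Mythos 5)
Your proposal follows essentially the same route as the paper's proof: the same one-pass simulation lemma relating the flagged execution relation of \cref{lem:tailWhileRules} to the substituted body $p\psubst{(\assign{\impconstrfont{cnt}}{\impconstrfont{1}})}{\tail}$, with $\impconstrfont{cnt}$ recording the flag and agreement on $\vars{p}$, followed by identifying the while-loop unfolding with the flagged-pass decomposition. Your explicit accounting of the per-iteration constants, which the paper's sketch leaves implicit in the choice of rule constants, is a correct sharpening rather than a deviation: since execution is deterministic, the stated iff does force the exact step counts to differ by a constant independent of the number of recursive passes, exactly as you observe.
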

Since we use an additional register, we only get partial equivalence of states in the correctness theorem.
This is sufficient for our purposes nonetheless.

\subsection{$\impc$ to $\impw$}\label{sec:impctoimpw}
We next eliminate calls to $\impw$ programs.
We simply inline every call to a program $p^w$
by copying its used registers to fresh memory locations,
executing the program there, and copying the result register back:
\begin{definition}
Let $\vars\app p^w=\{r_1,\dotsc,r_n\}$ and $m$ be a renaming to fresh registers.
We define the inlining of a call to program $p^w$ with return register $r$ as
\begin{gather*}
\impctoimpw{\call{p^w}{r}} \define \seq{\seq{\assign{m\app r_1}{r_1}}{\seq{\dotsb}{\seq{\assign{m\app r_n}{r_n}}}{p^w\psubst{m\app x}{x}}}}{\assign{r}{m\app r}}.
\end{gather*}
For an $\impc$ program $p$, we let $\impctoimpw{p}$ denote the $\impw$ program obtained from $p$ by inlining all calls.
\end{definition}
Note that called programs must already be compiled to $\impw$ and hence do not contain further calls.
Arguing about a program in a different memory location requires the obvious substitution lemma:
\begin{lemma}\label{lem:neat_subst}
$\bigstepTime{p\psubst{(m\app x)}{x}}{s}{n}{s'}$ implies $\bigstepTime{p}{s\comp m}{n}{s'\comp m}$
for injective $m$.
\end{lemma}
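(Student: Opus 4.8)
The plan is to prove this by rule induction on the big-step derivation of the premise $\bigstepTime{p\psubst{(m\app x)}{x}}{s}{n}{s'}$, generalising over $s$, $s'$, and $n$. A clean way to organise the induction is to observe that syntactic substitution commutes with every command constructor (e.g.\ $(\seq{p_1}{p_2})\psubst{(m\app x)}{x}=\seq{p_1\psubst{(m\app x)}{x}}{p_2\psubst{(m\app x)}{x}}$, and $(\assign{r}{a})\psubst{(m\app x)}{x}=\assign{(m\app r)}{a\psubst{(m\app x)}{x}}$). Hence the last rule applied in the derivation determines the top-level shape of $p$, and the constant $\const$ contributed by each rule, as well as the recursive step counts, match on both sides since renaming leaves the program structure intact.

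The proof rests on two auxiliary facts that commute the renaming $m$ past the two ways a program inspects and mutates the state. First, expression evaluation commutes with renaming: for every atom and arithmetic expression $a$ we have $\aval{a\psubst{(m\app x)}{x}}{s}=\aval{a}{s\comp m}$, proved by a routine induction on $a$ using \cref{fig:aval_atom_imp,fig:aval_imp}; the only nontrivial case is a register, where $\stateretr{s}{m\app r}=\stateretr{(s\comp m)}{r}$ by definition of $\comp$. Second --- and this is the only place injectivity of $m$ is used --- state updates commute with renaming:
\[
\funupd{s}{m\app r}{v}\comp m \;=\; \funupd{(s\comp m)}{r}{v}.
\]
To verify this, evaluate both sides at an arbitrary register $r'$: the left-hand side branches on whether $m\app r'=m\app r$, while the right-hand side branches on whether $r'=r$, and injectivity of $m$ makes these two conditions equivalent, so the states agree everywhere.

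With these in hand, the base case \nameref{infrule:assign} is immediate: $(\assign{r}{a})\psubst{(m\app x)}{x}$ steps to $\funupd{s}{m\app r}{\aval{a\psubst{(m\app x)}{x}}{s}}$; the first fact rewrites the assigned value to $\aval{a}{s\comp m}$, and the second fact turns the composed result state into $\funupd{(s\comp m)}{r}{\aval{a}{s\comp m}}$, which is exactly the outcome of running $\assign{r}{a}$ from $s\comp m$. The remaining cases are routine applications of the induction hypothesis: for \nameref{infrule:iftrue}, \nameref{infrule:iffalse}, \nameref{infrule:whileFalse}, and \nameref{inferrule:whileTrue} the guard test $\stateretr{s}{m\app r}=\stateretr{(s\comp m)}{r}$ guarantees the same branch is taken, and for \nameref{infrule:seq} and the \nameref{inferrule:whileTrue} recursion the intermediate state is threaded through the induction hypotheses (the hypothesis for the body yields $p$ run from $s\comp m$, the hypothesis for the continuation resumes from the already-composed state). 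The main obstacle is the state-update commutation equation: it is the sole ingredient requiring injectivity, and without it the lemma genuinely fails, since two distinct source registers conflated by $m$ would force a single update on the renamed side to correspond to an ill-defined update on the source side; everything else is bookkeeping dictated by the grammar of the $\imp$ commands.
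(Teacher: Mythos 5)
Your proof is correct and takes essentially the same route as the paper, whose entire proof of \cref{lem:neat_subst} is the one-line sketch ``by induction over the execution of $p$''. The auxiliary facts you isolate---commutation of renaming with expression evaluation and with state updates, the latter being the sole place injectivity is needed---are precisely the details that induction requires.
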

The inlining induces a blow-up that is dependent on the program size (but not the input), as shown below.
\begin{theorem}\label{thm:comp_c_correct}
$\bigstepTimePart{p}{s}{n}{\vars\app p}{s'}$ if $\bigstepTimePart{\impctoimpw{p}}{s}{f(n,p)}{\vars\app p}{s'}$
where $f(n,p)$ is linear in $n \cdot \abs{p}$.
Moreover, if $\bigstepTimePart{\impctoimpw{p}}{s}{n}{\vars\app p}{s'}$ then $\bigstepTimePart{p}{s}{n}{\vars\app p}{s'}$.
\end{theorem}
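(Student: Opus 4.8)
The plan is to prove \cref{thm:comp_c_correct} by induction on the structure of the $\impc$ program $p$, reducing everything to the single-call case handled by the inlining definition together with \cref{lem:neat_subst}. I would first observe that since every called program $p^w$ is already fully compiled to $\impw$, it contains no further calls, so inlining is a \emph{local} rewrite: the only non-trivial case is $\call{p^w}{r}$, and all compound constructs (sequences, if-then-else, while-loops) commute with $\impctoimpw{\cdot}$ because inlining is applied homomorphically to subprograms. Thus the two directions of the theorem both follow from an induction over the execution relation once the base case $\call{p^w}{r}$ is settled.

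For the base case, I would unfold $\impctoimpw{\call{p^w}{r}}$ into its three phases: the copy-in assignments $\assign{m\app r_i}{r_i}$, the substituted body $p^w\psubst{m\app x}{x}$, and the copy-out assignment $\assign{r}{m\app r}$. Using rule~\nameref{infrule:assign} and \nameref{infrule:seq}, the copy phases each take a constant number of steps (linear in $n = \abs{\vars\app p^w}$, hence linear in $\abs{p}$). The heart is relating $\bigstepTime{p^w\psubst{(m\app x)}{x}}{s}{n}{s'}$ to the original execution, which is exactly what \cref{lem:neat_subst} supplies: an execution of the renamed body on $s$ corresponds to an execution of $p^w$ on $s\comp m$, with the \emph{same} step count $n$ and the same output modulo $m$. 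Since $m$ is injective and maps the used registers to fresh ones, the copy-in step establishes $\stateretr{(s\comp m)}{r_i} = \stateretr{s}{r_i}$, and the copy-out step transports the result register back, yielding agreement on $\vars\app p$ and the partial-state equivalence recorded by the ``$R = \vars\app p$'' subscript.

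For the time bound, I would track constants explicitly through the induction. Each inlining of a single call adds $2\abs{\vars\app p^w} + 1$ assignments plus sequencing overhead, all constant-time; summing over all calls in $p$, and noting that each traversal of the loop body in a while-construct re-incurs the inlining overhead once per step of $p$, the total blow-up is bounded by $c \cdot n \cdot \abs{p}$ for a fixed constant $c$, giving the claimed $f(n,p)$ linear in $n\cdot\abs{p}$. The reverse implication (an $\impc$ execution of $n$ steps yields an $\impw$ execution that we can bound) and the final ``moreover'' direction (an $\impw$ execution of exactly $n$ steps reflects back to an $\impc$ execution of $n$ steps) follow from the fact that $\impw$ execution is deterministic, so the correspondence is a bijection on successful runs.

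The main obstacle I expect is the careful bookkeeping around freshness of the renaming $m$ and the interaction between inlining and the substitution lemma under nested control flow. Specifically, when a call appears inside a while-loop, the fresh registers $m\app r_i$ must not collide across iterations or with registers live in the surrounding program; I would need to argue that because $p^w$ is call-free and $m$ targets registers disjoint from $\vars\app p$, \cref{lem:neat_subst} applies cleanly at each occurrence and the copy-in/copy-out discipline restores exactly the registers in $\vars\app p$ while leaving the scratch registers irrelevant to the partial-state equivalence. Getting the constant in $f(n,p)$ to come out linear — rather than accidentally quadratic — hinges on observing that the per-call overhead is charged once per \emph{execution step} of the call, not once per static occurrence times dynamic steps, which is where the $n \cdot \abs{p}$ product (as opposed to $n \cdot \abs{p}^2$) originates.
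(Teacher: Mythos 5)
Your proposal takes essentially the same route as the paper: the paper's own proof likewise reduces correctness of inlining to the substitution lemma (\cref{lem:neat_subst}) and concludes by a straightforward induction over the program/execution, with the blow-up linear in $n \cdot \abs{p}$ because the per-call copy overhead is incurred per dynamic call execution. Your write-up is just a more detailed elaboration of that same argument (explicit copy-in/copy-out accounting, freshness bookkeeping, and the linear-versus-quadratic observation), so it is consistent with the paper's proof sketch.
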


Note that the fact that $\impw$ is reasonable w.r.t.\ time and space follows from standard reductions in computability theory.
A good exposition is in Erk and Priese's book~\cite{erkTheo}.
There it is shown that While-programs can be simulated by Goto-programs with constant space (one-to-one-correspondence between registers) and polynomial time overheads (Theorem 12.4.3), and Goto-programs can be simulated with constant space (one tape per register) and polynomial time overheads by $k$-tape Turing Machines (Theorem 11.4.1), and $k$-tape TMs can be simulated by single-tape TMs in constant space ($2k$ tuple letters per cell) and quadratic running time overheads (Theorem 7.4.5).
The other direction, showing that TMs can be simulated by While-programs is obvious.

\subsection{$\impw$ to $\impminus$}\label{sec:impwtoimpminus}
\begin{figure}[t]
  \begin{mathpar}
    \inferrule*
    {b\in\{\false,\true\}\\
    s'=\funupd{s}{r}{b}}
    {\bigstepTime{\assign{r}{b}}{s}{1}{s'}}

    \inferrule*
    {\stateretr{s}{r} \neq \false \\ \bigstepTime{p_1}{s}{n}{s'}}
    {\bigstepTime{\iif{r}{p_1}{p_2}}{s}{n+1}{s'}}

    \inferrule*
    {\bigstepTime{p_1}{s}{n_1}{s'} \\
    \bigstepTime{p_2}{s'}{n_2}{s''}}
    {\bigstepTime{\seq{p_1}{p_2}}{s}{n_1+n_2+1}{s''}}

    \inferrule*
    {\stateretr{s}{r} = \false \\ \bigstepTime{p_2}{s}{n}{s'}}
    {\bigstepTime{\iif{r}{p_1}{p_2}}{s}{n+1}{s'}}

    \inferrule*
    {\stateretr{s}{r} = \false}
    {\bigstepTime{\while{r}{p}}{s}{1}{s}}\hspace{1em}
    \inferrule*
    {\stateretr{s_1}{r} \neq \false\\
    \bigstepTime{p}{s_1}{n_1}{s_2}\\
    \bigstepTime{\while{r}{p}}{s_2}{n_2}{s_3}}
    {\bigstepTime{\while{r}{p}}{s_1}{n_1+n_2+2}{s_3}}
  \end{mathpar}
  \caption{Execution relation for $\impminus$.}\label{fig:imp_minus_def}
\end{figure}
Our programs so far operate on arbitrary-size natural numbers.
In contrast, our final language $\impminus$ operates on single bits,
represented as $\false$ and $\true$.
The commands of $\impminus$ and their semantics are shown in \cref{fig:imp_minus_def}.
There are no arithmetic expressions, and all registers represent a single bit.

We use bit-blasting to compile $\impw$ programs
to $\impminus$ programs.
The bit-blasting is parametrised with a number $w$, indicating the bit-width of the compilation.
Each $\impw$ register $r$ is represented as $w+1$ fresh $\impminus$ registers $r_0,\dotsc,r_w$:
The $\impminus$ register $r_{i}$ corresponds to the $i$-th bit of the $\impw$ register $r$,
and the register $r_0$ indicates whether all bits are $\false$.
Arithmetic expressions of $\impw$ are bit-blasted to $\impminus$ programs.
We showcase the bit-blasting for addition here.
The addition of $x$ and $y$ in register $r$ is compiled to
\begin{gather}
\impminusminusadd{x}{y}{r}{w} \equiv \seq{\copyopnd{a}{x}{w}}{\seq{\copyopnd{b}{y}{w}}{\adder{r}{w}}},
\end{gather}
where $a,b$ are fresh, fixed $\impw$ registers.
The function copies the first $w$ bits of $x$ and $y$ to $a_1,\dotsc,a_w$ and $b_1,\dotsc,b_w$, respectively,
which are then added by the function $\adder{r}{w}$.
The function $\adder{r}{w}$ puts the result of the addition in registers $r_0,\dotsc,r_w$.
It is implemented as follows:
\begin{gather}
\adder{r}{w} \define \fulladder{r}{1} ; \dotsb;  \fulladder{r}{w} ; \assign{carry}{\false} ; \assign{zero}{\false},
\end{gather}
where $\fulladder{r}{i}$ is a program that implements a standard full adder with zero check.
It adds the bits $a_i$ and $b_i$ and the carry register $carry$, storing the result in $r_i$, the new carry value in $carry$, and stores in $zero$
whether the result and the previous $zero$ value were $\false$.
The full adder is called for all bits of the addends.
Then $r_0$ is written, and the $carry$ and $zero$ registers are reset.

We bit-blast all other arithmetic expressions of $\impw$ in a similar way;
details are in the formalisation.
In the next definition, we fix a maximum bit-width $w$:
\begin{definition}\label{def:bitblast}
We denote with $\bitblastAexp{a}{w}{r}$ the bit-blasted version of arithmetic expression~$a$ stored in register $r$.
For an $\impw$ state $s$, we denote with $\bitblastState{s}{w}$ its corresponding $\impminus$ state,
i.e.\ $\stateretr{\bitblastState{s}{w}}{r_i} = (\stateretr{s}{r})_i$ for all registers $r$ and $i < w$,
where $n_i$ denotes the $i$-th bit of $n$.
\end{definition}
The functional correctness can now be stated as follows:
\begin{lemma}\label{lem:bitblastExpr}
If $\maxset\{\maxconst{a}, \aval{a}{s}, \maxconst{s}\}< 2^w$
then $\bigstepTime{\bitblastAexp{a}{w}{r}}{\bitblastState{s}{w}}{f(w)}{\bitblastState{\funupd{s}{r}{\aval{a}{s}}}{w}}$,
where $f(w)$ is linear in $w$.
\end{lemma}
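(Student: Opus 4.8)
The plan is to prove \cref{lem:bitblastExpr} by structural induction on the arithmetic expression $a$, reducing to the two base cases (atoms) and a single inductive step (binary operators $\otimes\in\{+,-\}$), and then to invoke the correctness of the individual bit-blasted building blocks --- chiefly $\adder{r}{w}$ and its analogue for subtraction --- which are fixed, finite straight-line $\impminus$ programs whose running time is manifestly $\landau{w}$. The key invariant carried through the induction is the one from \cref{def:bitblast}: executing $\bitblastAexp{a}{w}{r}$ on $\bitblastState{s}{w}$ must land in exactly $\bitblastState{\funupd{s}{r}{\aval{a}{s}}}{w}$, i.e.\ it writes the correct low-order $w$ bits of $\aval{a}{s}$ into $r_1,\dots,r_w$, sets the zero-flag register $r_0$ correctly, and --- crucially --- leaves all registers belonging to \emph{other} $\impw$ variables untouched. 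This last ``frame'' condition is what lets the inductive step compose sub-blastings sequentially without clobbering operands.

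First I would handle the atoms. For $\bitblastAexp{\aconst{n}}{w}{r}$ the program is a straight-line sequence of $w+1$ constant bit-assignments writing the binary digits of $n$ (and the zero flag), each costing $1$ step by the first rule of \cref{fig:imp_minus_def}, so the cost is exactly $w+\landau{1}$, linear in $w$; the hypothesis $\maxconst{a}<2^w$ guarantees $n$ fits in $w$ bits so no high-order bit is lost. For $\bitblastAexp{\avar{r'}}{w}{r}$ the program copies the $w+1$ registers $r'_0,\dots,r'_w$ into $r_0,\dots,r_w$, again $\landau{w}$ steps, and here $\aval{a}{s}<2^w$ ensures the stored representation is faithful. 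Both cases preserve all unrelated registers by construction.

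For the inductive step on $a = A_1\otimes A_2$, I would unfold $\impminusminusadd{x}{y}{r}{w}$ (and the symmetric $\impminusminussub{}{}{}{}$), apply the induction hypothesis to the two sub-expressions to establish that the copy-phases load the correct $w$ low bits of $\aval{A_1}{s}$ and $\aval{A_2}{s}$ into the fixed scratch registers $a,b$, then invoke the dedicated correctness lemma for $\adder{r}{w}$ stating that it computes the $w$-bit sum with carry/zero handling. Running times compose additively via the \textsc{Seq} rule of \cref{fig:imp_minus_def}: two recursive costs (each $\landau{w}$ by the IH) plus the copy blocks ($\landau{w}$) plus the $w$-fold full-adder chain ($\landau{w}$), yielding $\landau{w}$ overall. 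The side-condition $\maxset\{\maxconst{a},\aval{a}{s},\maxconst{s}\}<2^w$ propagates to the sub-expressions because the constants and evaluated values of $A_1,A_2$ are bounded by those of $a$, so no sub-blasting overflows its $w$-bit budget and the full adder's carry-out can be safely discarded (for $+$) or the borrow semantics agree with truncated subtraction on $\natty$ (for $-$).

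The main obstacle I anticipate is not the asymptotic step-count bookkeeping --- which is routine given the additive \textsc{Seq} rule --- but establishing the exact end-state equality $\bitblastState{\funupd{s}{r}{\aval{a}{s}}}{w}$ rather than mere partial agreement on $r$. This requires proving that the scratch registers $a,b,carry,zero$ used by the adder are \emph{disjoint} from every $r_i$ representing a genuine $\impw$ variable, and that they are reset (the trailing $\assign{carry}{\false}$ and $\assign{zero}{\false}$ in $\adder{r}{w}$) so the state is literally $\bitblastState{\cdot}{w}$ and not merely equal on the relevant fragment. Getting this freshness/disjointness discipline right --- and threading it through the recursion so that nested sub-expressions do not reuse the same scratch registers destructively --- is the delicate bookkeeping that the proof hinges on, and I expect the bulk of the formalisation effort to reside there.
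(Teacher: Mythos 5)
Your overall plan overlaps substantially with the paper's (case analysis on the form of $a$, correctness of $\copyopnd{a}{x}{w}$ and $\adder{r}{w}$ as building blocks, additive time composition, frame/disjointness bookkeeping), but your central device --- structural induction on $a$ with the induction hypothesis applied to ``the two sub-expressions'' --- does not exist in this language. Arithmetic expressions here are \emph{flat}: an expression is either an atom or $A_1\otimes A_2$ where $A_1,A_2$ are themselves \emph{atoms} (constants $\aconst{n}$ or registers $\avar{r}$), as fixed by \cref{fig:aval_atom_imp,fig:aval_imp}. Consequently there is no nesting, no recursive case, and no induction hypothesis to invoke. Moreover, the compiled program $\impminusminusadd{x}{y}{r}{w}$ does not recursively bit-blast its operands at all: the copy phases $\copyopnd{a}{x}{w}$ and $\copyopnd{b}{y}{w}$ are primitive register-to-register bit copies into the fixed scratch locations, not instances of \cref{lem:bitblastExpr} applied to $A_1$ and $A_2$. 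So the step ``apply the IH to establish that the copy-phases load the correct bits'' would fail as stated; what is actually needed is a separate correctness and running-time lemma for $\copyopnd{}{}{}$ itself.

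This points to the second, related gap: you dismiss the running-time analysis of the building blocks as ``manifestly $\landau{w}$,'' but that is precisely where the paper locates the content of the proof. The programs $\copyopnd{a}{x}{w}$ and $\adder{r}{w}$ are defined by recursion on the bit-width, so the paper proves their time bounds (and correctness) by induction on $w$, resting on the fact that $\fulladder{r}{i}$ takes constant time; the lemma then follows by assembling these bounds over the finitely many expression forms. In short, the induction in this proof is on $w$, not on the structure of $a$; replacing your phantom structural induction by a plain three-way case split plus $w$-induction for the copy/adder/full-adder lemmas would repair the argument. Your emphasis on the frame condition and scratch-register disjointness is well placed and is indeed part of the formalisation burden, but it is not a substitute for the missing $w$-induction.
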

Lastly, we show that the overall runtime blow-up of the bit-blasted program is polynomial.
We first lift the bit-blasting of expressions to $\impw$ programs $p$, denoted by $\bitblast{p}{w}$.
In $\bitblast{p}{w}$, arithmetic expressions are replaced with their bit-blasted version
and all other constructs are mapped to their corresponding constructs in the obvious way.
For functional correctness, we get:
\begin{theorem}\label{lem:bitblastProgram}
If $\bigstepTime{p}{s}{n}{s'}$
and $n<w$
and $\maxset\{\maxconst{s},\maxconst{p}\} \cdot 2^n < 2^w$,
then $\bigstepTime{\bitblast{p}{w}}{\bitblast{s}{w}}{f(n,w)}{\bitblast{s'}{w}}$ where $f(n,w)$ is linear in $n\cdot w$.
\end{theorem}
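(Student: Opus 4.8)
The plan is to prove the claim by rule induction on the $\impw$ big-step derivation $\bigstepTime{p}{s}{n}{s'}$, observing that the two side conditions $n<w$ and $\maxset\{\maxconst{s},\maxconst{p}\}\cdot 2^n<2^w$ are shaped exactly so that they survive the descent to sub-derivations. The base case is the assignment rule \nameref{infrule:assign}, where $p=\assign{r}{a}$, $n=\const$, and $s'=\funupd{s}{r}{\aval{a}{s}}$. By definition $\bitblast{p}{w}=\bitblastAexp{a}{w}{r}$, so the goal follows directly from \cref{lem:bitblastExpr} once its hypothesis $\maxset\{\maxconst{a},\aval{a}{s},\maxconst{s}\}<2^w$ is established. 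This is immediate from ours: both $\maxconst{a}$ (as $\maxconst{a}\le\maxconst{p}$) and $\maxconst{s}$ are at most $\maxset\{\maxconst{s},\maxconst{p}\}\cdot 2^n<2^w$, and since every arithmetic expression is a single operation on two atoms we have $\aval{a}{s}\le 2\cdot\maxset\{\maxconst{s},\maxconst{a}\}\le\maxset\{\maxconst{s},\maxconst{p}\}\cdot 2^n<2^w$, using $2^n\ge 2$ as $n=\const\ge 1$. The resulting $\impminus$ run then costs $\landau{w}$ steps, which is linear in $n\cdot w$.

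For the compound rules \nameref{infrule:seq}, \nameref{infrule:iftrue}, \nameref{infrule:iffalse}, \nameref{infrule:whileFalse}, and \nameref{inferrule:whileTrue}, I would apply the induction hypothesis to each immediate sub-derivation and recombine the resulting $\impminus$ runs with the corresponding rule of \cref{fig:imp_minus_def}. Before each appeal to the hypothesis two conditions must be re-established. The step-count condition is easy: every sub-step-count $n_i$ satisfies $n_i\le n<w$ because $n$ is $n_1+n_2+\const$ (resp.\ $n_0+\const$). The product condition is the delicate part. For a sub-derivation run from the initial state $s$ it is immediate, since $\maxconst{p_i}\le\maxconst{p}$ and $2^{n_i}\le 2^n$. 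For the second sub-derivation of a sequence or while-iteration, run from the intermediate state $s''$, I would first bound $\maxconst{s''}\le\maxset\{\maxconst{s},\maxconst{p}\}\cdot 2^{n_1}$ via the multiplicative doubling invariant underlying \cref{thm:spaceConsume}, and then conclude $\maxconst{s''}\cdot 2^{n_2}\le\maxset\{\maxconst{s},\maxconst{p}\}\cdot 2^{n_1+n_2}\le\maxset\{\maxconst{s},\maxconst{p}\}\cdot 2^n<2^w$, with the analogous bound $\maxconst{p}\cdot 2^{n_2}<2^w$ handling the program constants.

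For the conditional and loop cases I additionally need that the $\impminus$ test faithfully mirrors the $\impw$ condition $\stateretr{s}{r}\neq 0$. This is maintained as an invariant: \cref{lem:bitblastExpr} guarantees that after every compiled assignment the $\impminus$ state equals $\bitblast{\cdot}{w}$ of the current $\impw$ state, including the dedicated zero-indicator register, so testing that register in $\impminus$ agrees with testing $r\neq 0$ in $\impw$ and the same true/false rule fires on both sides. Faithfulness of the final state, i.e.\ no truncation loss in $\bitblast{s'}{w}$, is likewise guaranteed, as $\maxconst{s'}\le\maxset\{\maxconst{s},\maxconst{p}\}\cdot 2^n<2^w$.

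For the runtime bound I would carry an explicit linear ansatz $f(n,w)=c_1\,n\,w+c_2$ through the induction: assignments contribute the $\landau{w}$ cost from \cref{lem:bitblastExpr}, while each structural rule adds only the constant $\impminus$ overhead of \cref{fig:imp_minus_def}, and there are at most $n$ such contributions. The main obstacle is the bookkeeping of the second paragraph: propagating the product side condition to the second sub-derivation needs the sharp multiplicative state-growth bound rather than the literal statement of \cref{thm:spaceConsume}, and closing the linear ansatz under the sequence and while recurrences forces the $\impw$ step constants $\const$ to be positive, so that $n$ dominates both the assignment count and the control-flow overhead. The remaining reassembly of derivations is routine.
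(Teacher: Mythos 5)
Your proposal is correct and takes essentially the same route as the paper's own proof sketch: induction over the execution, with the assignment case discharged by \cref{lem:bitblastExpr} and the composition case using \cref{thm:spaceConsume} to re-establish the size side-condition for the second sub-derivation, all remaining cases being routine exponent bookkeeping. Your two refinements --- rule induction on the big-step derivation rather than the paper's stated structural induction on $p$ (which is what actually makes the \nameref{inferrule:whileTrue} case go through), and the sharp multiplicative form of \cref{thm:spaceConsume} to avoid the power-of-two rounding slack --- are just the precise versions of what the paper's terse sketch must implicitly do.
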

\begin{proof}[Proof sketch]
The proof is by structural induction on the $\impminus$ program $p$.
There are two interesting cases.
The first is the case of assignment, where the theorem follows from Lemma~\ref{lem:bitblastExpr}.
The other case is that of composition, in which case the theorem follows from Theorem~\ref{thm:spaceConsume}.
All other cases follow from basic properties of program composition and algebraic manipulation of exponents.
\end{proof}

\section{Discussion}\label{sec:concl}

Our work touches the areas of
formalised computability and complexity theory,
algorithm refinements and program synthesis,
and certified compilation.

Notable work in the first area includes Norrish's~\cite{norrishComputabilityThy}
formalisation results, like Rice's theorem, in a lambda calculus model.
Xu et al.~\cite{urbanTuringMachines} and Asperti and Ricciotti~\cite{turingMachinesMatita}
proved basic results on Turing machines,
using Hoare-like logics to reason about deeply-embedded machines.
Carneiro~\cite{carneiropartrec} proved basic results using partial recursive functions.
The richest line of work is by Forster~\cite{forsterThesis},
modelling the call-by-value lambda calculus, Turing machines, and other models, and proving simulations between them.
Forster and Kunze~\cite{coqexctractionlambda} introduced a largely automated synthesis framework
from native Coq functions to said lambda calculus model,
including semi-automated time bound proofs.
This was used for serious algorithms, like the Cook-Levin theorem~\cite{CookLevinCoq}.
However, their model is not reasonable in space in general.
Our contribution is the first synthesis method into a simple computation model that is reasonable in time and space.

To make our method practical, we used ideas from algorithm refinements and transport-based program synthesis~\cite{transport,transportcoq,huffmanTransfer},
introducing a novel synthesis for recursive functions in \cref{sec:natify_funs}.
Other related approaches are algorithm refinements, which were deeply studied by Lammich~\cite{LammichRefinement,lammichSepRef,lammichMonads,LammichFlows}.

In the area of certified compilation, most substantial developments geared towards practical compilers,
such as CakeML~\cite{ProofProducingML} and CompCert~\cite{compcert}.
We, on the other hand,
we were mainly inspired by the simple and elegant use of $\imp$ languages~\cite{winskelOpSem}
in educational material~\cite{coqSemanticsIMP,nipkowSemantics},
which is sufficient to formalise the theory of efficient algorithms and complexity theory.

\paragraph{Future Work}
Our framework can be extended in several ways,
such as proving time and memory bounds of synthesised programs with respect to some sensible semantics for \isabellehol functions~\cite{fdsBook},
generalising from tail-recursion to general recursion,
reducing the time and memory overhead of the compilers for finer analyses needed in efficient algorithms,
and extending it to richer models of computation,
e.g.\ probabilistic computation, interactive computation, or online computation.
We plan to improve the framework's performance
such that it can be applied
to the verification of serious computational objects,
e.g.\ reductions, like the Cook-Levin theorem~\cite{cooklevin} or Karp's 21 NP-complete problems,
and the verification of running time bounds of complex algorithms in general, e.g.\
flow algorithms~\cite{LammichFlows,ScalingIsabelle} or matching algorithms~\cite{blossomIsabelleFull}.
A challenge with our framework, as in any system that synthesises programs using natural numbers as the sole datatype, is the computational overhead from encoding/decoding datatypes into natural numbers.
Without special care, unsuspecting users may discrepancies between the expected and actual running time of synthesised programs.
Methods to smoothly handle that discrepancy would greatly improve the usability of our framework and are a possible future direction.

\camerareadytext{%
\subsubsection{Acknowledgements}
The authors thank
Bilel Ghorbel,
Florian Keßler,
Max Lang,
Nico Lintner,
Jay Neubrand,
Jonas Stahl, and
Andreas Vollert
for their contributions as part of their student coursework at TU Munich.
}
{}
\clearpage
\bibliographystyle{splncs04}
\bibliography{bibliography,long_paper}

\arxivsubmittext{}{\clearpage}
\appendix
\section{$\hol^{(\mathtextfont{\tcabbrev})}$ to $\hol^{(\mathtextfont{\tcabbrev})\natty}$}\label{sec:holtctoholtcnat_details}

\let\oldholtc\holtc
\let\oldholtcnat\holtcnat
\let\oldtermholtc\termholtc
\let\oldtermholtcnat\termholtcnat
\renewcommand{\holtc}{\hol}
\renewcommand{\holtcnat}{\holnat}
\renewcommand{\termholtc}{\termhol}
\renewcommand{\termholtcnat}{\termholnat}

\subsection{Compilation of Datatypes}\label{sec:natify_datatypes_details}

Let $\vec{\alphaty}\app t=C_1\app \vec{\alphaty_1}\mid\dotsb\mid C_n\app\vec{\alphaty_n}$ be an algebraic datatype
with $a_i\define \abs{\vec{\alphaty_i}}$.
Recall that we require an injective pairing function
$\pairnat\holhasty\natty^2\purefun\natty$
and functions
$\fstnat,\sndnat\holhasty\natty\purefun\natty$
with
\begin{gather}
\fstnat\app(\pairnat\app n\app m)=n\qquad\text{and}\qquad\sndnat\app(\pairnat\app n\app m)=m.
\end{gather}
Our construction proceeds as follows:
\begin{enumerate}[label=(\alph*)]
\item We define $\termholnat{C_i}\holhasty \natty^{a_i}\purefun \natty$ by
\begin{gather}
\termholnat{C_i}\app\vec{x}\define
\begin{cases}
\pairnat\app i\app (\pairnat\app x_1\app (\pairnat\app x_2(\dotsb(\pairnat\app x_{a_i-1}\app x_{a_i})\dotsb),&\text{if }a_i>0\\
\pairnat\app i\app 0,&\text{otherwise}
\end{cases}.
\end{gather}
\item
For each $i,j$, we define the \emph{selector} $\natselector{i}{j}\holhasty \natty\purefun \natty$ as
\begin{gather}
\natselector{i}{j}\app x\define \parenths{\holif{j<i}{\fstnat}{\id}}\app(\sndnat^j\app x).
\end{gather}
\item
We define
$\casenat\holhasty \parenths[\big]{\parenths{\natty^{a_1}\purefun\alphaty}\purefun{\dotsb\purefun\parenths{\natty^{a_n}\purefun\alphaty}\purefun\natty\purefun\alphaty}}$
by
\begin{gather}
\casenat\app\vec{f}\app x\define{}
\begin{cases}
f_i\app\parenths{\natselector{a_i}{1}\app x}\dotsb \parenths{\natselector{a_{i}}{a_i}\app x},&\text{if }i=\fstnat\app x, 1\leq i\leq n-1\\
f_n\app\parenths{\natselector{a_n}{1}\app x}\dotsb \parenths{\natselector{a_n}{a_n}\app x},&\text{otherwise}.
\end{cases}
\end{gather}
\item We define $\natify\holhasty \vec{\alpha}\app t\purefun\natty$
and $\denatify\holhasty \natty\purefun \vec{\alpha}\app t$
by
\begin{align}
\natify\app (C_i\app\vec{x})&\define \termholnat{C_i}\app\vec{\natify\app x_j}\label{eq:natify},\\
\denatify\app (\termholnat{C_i}\app\vec{x})&\define C_i\app\vec{\denatify\app x_j}\label{eq:denatify}.
\end{align}
\end{enumerate}
Note that $\natify,\denatify$ are overloaded and \cref{eq:natify,eq:denatify}
well-defined if all $\alpha\in \vec{\alpha_1}\cup\dotsb\cup\vec{\alpha_n}$
are already encoded or \makebox{$\alpha=\vec{\alpha}\app t$}.

\begin{theorem*}
$\vec{\alphaty}\app d=C_1\app \vec{\alphaty_1}\mid\dotsb\mid C_n\app\vec{\alphaty_n}$ is encodable
if all $\alpha\in \vec{\alpha_1}\cup\dotsb\cup\vec{\alpha_n}$ are encoded or \makebox{$\alpha=\vec{\alpha}\app d$}.
Also, for all $1\leq i\leq n$, $1\leq j\leq a_i$ and relations $R$, we have
\begin{align*}
&\parenths{\funrel{\relnat^{a_i}}{\relnat}}\app \termholnat{C_i}\app C_i,\qquad
\app\relnat\app n\app(C_i\app \vec{x})\implies \relnat\app (\natselector{a_i}{j}{n})\app x_j,\\
&\parenths[\big]{\funrel{\parenths{\funrel{\relnat^{a_1}}{R}}}{\funrel{\dotsb\relarrow\parenths{\funrel{\relnat^{a_n}}{R}}}{\funrel{\relnat}{R}}}}\app \casetypenat{d}\app \casetype{d}.
\end{align*}
\end{theorem*}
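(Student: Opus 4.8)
The plan is to treat encodability and the three relatedness facts in turn, reducing everything to the graph characterisation of $\relnat$ (\cref{def:relation}) together with injectivity of $\pairnat$. Throughout I read $\relnat\app n\app x$ as ``$n$ encodes $x$'', i.e.\ $n=\natify\app x$, so that the facts become pointwise identities about $\natify$. I would first show the datatype is encodable, i.e.\ that $\denatify$ is a left inverse of $\natify$ (equivalently that $\natify$ is injective onto its image). This goes by structural induction on the datatype: for $x=C_i\app\vec{y}$ unfold \cref{eq:natify,eq:denatify} and discharge each case with the induction hypotheses on the $y_j$, using injectivity of $\pairnat$ to recover both the tag $i$ and the encoded components; the nullary case $a_i=0$, whose encoding is $\pairnat\app i\app 0$, is handled separately.

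The constructor fact is then immediate. Assuming $\relnat\app n_k\app y_k$ for every argument, i.e.\ $n_k=\natify\app y_k$, \cref{eq:natify} gives $\termholnat{C_i}\app\vec{n}=\natify\app(C_i\app\vec{y})$, which is exactly $\relnat\app(\termholnat{C_i}\app\vec{n})\app(C_i\app\vec{y})$; hence $\parenths{\funrel{\relnat^{a_i}}{\relnat}}\app\termholnat{C_i}\app C_i$.

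The technical core is the selector fact, which I reduce to the identity $\natselector{a_i}{j}\app(\natify\app(C_i\app\vec{y}))=\natify\app y_j$. Applying $\sndnat^{j}$ to $\natify\app(C_i\app\vec{y})$ strips the tag together with the first $j-1$ encoded components --- a short induction on $j$ using $\sndnat\app(\pairnat\app n\app m)=m$ --- leaving a value whose leading component is $\natify\app y_j$. The guard $\holif{j<a_i}{\fstnat}{\id}$ then selects correctly: for $j<a_i$ the residue is still a pair and $\fstnat$ extracts $\natify\app y_j$, whereas for $j=a_i$ the last component sits unwrapped and $\id$ returns it unchanged. Reading the resulting equality back through $\relnat$ yields $\relnat\app(\natselector{a_i}{j}\app n)\app y_j$. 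I expect this step to be the main obstacle: it is precisely where the otherwise uniform nested-pairing layout degenerates, so the split between the $\fstnat$ and $\id$ branches (and the nullary constructor) must be argued carefully rather than by a single uniform computation.

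Finally, the case-combinator fact follows by case analysis on $x$. With $x=C_i\app\vec{y}$ and $n=\natify\app x$, the auxiliary observation $\fstnat\app n=i$ (again by injectivity of $\pairnat$) ensures that the defining clauses of $\casetypenat{d}$ select the $i$-th branch, reducing $\casetypenat{d}\app\vec{f}\app n$ to $f_i$ applied to the selectors, while $\casetype{d}\app\vec{g}\app x$ reduces to $g_i\app\vec{y}$. The selector fact supplies $\relnat\app(\natselector{a_i}{j}\app n)\app y_j$ for each $j$, so the hypothesis $\parenths{\funrel{\relnat^{a_i}}{R}}\app f_i\app g_i$ transports these to $R\app(\casetypenat{d}\app\vec{f}\app n)\app(\casetype{d}\app\vec{g}\app x)$, which is the claim. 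The only subtlety is that the ``otherwise'' clause must coincide with the $C_n$ branch, which holds because for $i=n$ the tag $\fstnat\app n=n$ falls outside the range $1\le i\le n-1$ and thus lands in exactly that clause.
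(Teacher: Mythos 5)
Your proposal is correct and follows essentially the same route as the paper's proof: encodability is reduced to $\denatify\app(\natify\app x)=x$ by structural induction using the projection properties of $\pairnat$, $\fstnat$, $\sndnat$ and the encodedness of the argument types, and the three relatedness facts are then discharged by definitional unfolding of $\natify$, the selectors, and the case combinators. The paper states this only as a two-sentence sketch; your write-up is a faithful elaboration of the same argument, including the $j<a_i$ versus $j=a_i$ selector split and the tag-based branch selection that the paper leaves implicit.
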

\begin{proof}
We have to show that $\denatify\app(\natify\app x)=x$ for all $x$.
This follows directly by definition of $\denatify,\natify$ and
the fact that $\fstnat,\sndnat$ are inverse to $\pairnat$ and all
$\alpha\in \vec{\alpha_1}\cup\dotsb\cup\vec{\alpha_n}$ are encoded (and hence have inverse functions $\natify,\denatify$).

The relatedness statements follow similarly by definition and the relatedness properties of the type arguments en- and decoding functions.
\end{proof}

\begin{example}\label{ex:natify_datatype}
For the type of lists, \lstinline{'a list = Nil | Cons 'a "'a list"},
the following constants are defined
\begin{lstlisting}[language=isabelle]
definition Nil_nat = pair_nat 1 0
definition Cons_nat [*x*] [*xs*] = pair_nat 2 (pair_nat [*x*] [*xs*])
definition case_list_nat [*f1*] [*f2*] [*n*] = if fst [*n*] = 1 then [*f1*]
  else [*f2*] (select$_{2,1}$ [*n*]) (select$_{2,2}$ [*n*])
definition denatify =
  case_list_nat Nil ($\lambda$[**x**] [**xs**]. Cons (denatify [**x**]) (denatify [**xs**]))
definition natify =
  case_list Nil_nat ($\lambda$[**x**] [**xs**]. Cons_nat (natify [**x**]) (natify [**xs**]))
\end{lstlisting}
and the following theorems are proven:
\begin{lstlisting}[language=isabelle]
lemma "$\relnat$ Nil_nat Nil"
lemma "($\relnat$ $\relarrow$ $\relnat$ $\relarrow$ $\relnat$) Cons_nat Cons"
lemma "(R $\relarrow$ ($\relnat$ $\relarrow$ $\relnat$ $\relarrow$ R) $\relarrow$ $\relnat$ $\relarrow$ R)
  case_list_nat case_list"
lemma "$\relnat$ [*ns*] ([*x*] # [*xs*]) $\implies$ $\relnat$ (select$_{2,1}$ [*ns*]) [*x*]"
lemma "$\relnat$ [*ns*] ([*x*] # [*xs*]) $\implies$ $\relnat$ (select$_{2,2}$ [*ns*]) [*xs*]"
\end{lstlisting}
\end{example}

\subsection{Synthesis of $\holtcnat$ Functions}\label{sec:natify_funs_details}

\begin{theorem*}
$\relnat$, $\natify$, and $\denatify$ form a partial Galois equivalence.
Thus, there is some $\termholnat{f}$ that is $\relnat$-related to $\termhol{f}$.
\end{theorem*}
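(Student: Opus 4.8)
The plan is to instantiate the general transport machinery of~\cite{transport}: a partial Galois equivalence is determined by a pair of monotone, mutually (partially) inverse maps between two relations, and once it is established it transports every term on one side to a related term on the other. Here the two sides are the type $\alphaty$ under (full) equality and the naturals $\natty$ under the partial equivalence relation given by equality restricted to the \emph{valid codes}, i.e.\ the range of $\natify$; the two maps are $\natify\holhasty\alphaty\purefun\natty$ and $\denatify\holhasty\natty\purefun\alphaty$; and the induced Galois relator is exactly the relation $\relnat$ of~\cref{def:relation}. So the first task is to verify the equivalence axioms for this data, and the second is to read off the existence of $\termholnat{f}$.

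For the equivalence, the single nontrivial ingredient is the retraction $\denatify\app(\natify\app x)=x$, which holds for \emph{every} $x$: this is precisely what was shown for the datatype construction in~\cref{sec:natify_datatypes_details}, and it extends by a routine induction over the build-up of encoded types to all encoded $\alphaty$. The opposite composite $\natify\comp\denatify$ is the identity \emph{only} on valid codes, since an arbitrary natural need not lie in the range of $\natify$; this is exactly why the equivalence is \emph{partial}. I would therefore take the right-hand relation to be the partial equivalence relation whose domain is the range of $\natify$, on which $\natify\app(\denatify\app n)=n$ holds by definition of the domain. All remaining obligations — monotonicity of $\natify$ and $\denatify$ together with the two half-Galois (adjunction) properties — then collapse to trivialities: both the left relation and the restricted right relation are equalities, so monotonicity is congruence of equality and each adjunction step reduces to the retraction just established. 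Unfolding~\cref{def:relation} finally identifies the induced relator with $\relnat$.

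For the ``Thus'' part, I would invoke the closure theorem of~\cite{transport}: partial Galois equivalences are preserved by the type constructors, and in particular a function type (with relator $\funrel{\cdot}{\cdot}$) of an equivalence is again an equivalence. Since a first-order $f\holhasty\vec{\alphaty}\purefun\alphaty$ lives in an iterated function type built from encoded types, transporting $f$ along the equivalence yields a witness $\termholnat{f}$ together with its relatedness $\parenths{\funrel{\relnat}{\dotsb\relarrow\relnat}}\app\termholnat{f}\app f$; concretely one may set $\termholnat{f}\define\natify\comp f\comp\denatify$ (iterated over the arguments), whose relatedness then follows directly from the retraction and the functional behaviour of $\relnat$. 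I expect the main obstacle to be not any single computation but getting the \emph{partial} structure right: one has to commit to the range-of-$\natify$ domain as the partial equivalence relation on $\natty$ and check the equivalence only up to that domain, since totalising it would force the false identity $\natify\comp\denatify=\id$. Once that domain is fixed, every axiom reduces to the one retraction equation.
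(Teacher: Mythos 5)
Your proposal is correct and takes essentially the same route as the paper: both instantiate the machinery of~\cite{transport}, with the retraction $\denatify\comp\natify=\id$ (and the injectivity of $\natify$ it yields) as the only substantive facts, and both obtain $\termholnat{f}$ from that paper's closure theorems for function relators. The sole difference is one of packaging --- the paper checks the sufficient conditions of~\cite[Lemma 2]{transport} (right-totality and right-uniqueness of $\relnat$, plus $\relnat\app n\app x\implies\denatify\app n=x$ and $\relnat\app(\natify\app x)\app x$) rather than verifying the Galois-equivalence axioms directly over your explicitly chosen PER on codes, and it leaves implicit (in citing Theorems 1 and 2 of~\cite{transport}) the concrete witness $\natify\comp f\comp\denatify$ that you spell out.
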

\begin{proof}
Due to~\cite[Lemma 2]{transport}, it suffices to show that
\begin{enumerate*}
\item $\relnat$ is right-total and right-unique,
\item $\relnat\app n\app x$ implies $\denatify\app n = x$, and
\item $\relnat\app (\natify \app  x)\app x$
\end{enumerate*}
for all $n,x$.
By \cref{def:nat_encoded},
$\relnat$ is right-total and $\relnat\app (\natify \app  x)\app x$.
Since $\natify$ is injective ($\cref{def:nat_encoded}$),
$\relnat$ is also right-unique
and $\relnat\app n\app x$ implies $\denatify\app n = x$.

The existence of $\termholnat{f}$ now follows from~\cite[Theorem 1 and 2]{transport}.
\end{proof}

\begin{theorem*}[Whitebox-Transport~\cite{huffmanTransfer}]
$\whiteboxtransport{(\lambda\vec{x}.\app \termholtc{t})}$ and $\lambda\vec{x}.\app \termholtc{t}$ are $\relnat$-related
and white-box transports preserve tail-recursiveness and first-order applications.
\end{theorem*}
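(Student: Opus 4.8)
The plan is to read $\whiteboxtransport{\cdot}$ as a structure-preserving map defined by recursion on the body following the grammar of \cref{fig:holadts}, and to prove the relatedness half by a logical-relations argument in the style of the Transfer package~\cite{huffmanTransfer}. Concretely, $\whiteboxtransport{\cdot}$ sends each bound variable $x$ to its natified counterpart $\termholnat{x}$, each constructor $C_i$ to $\termholnat{C_i}$ and each case combinator $\casetype{d}$ to $\casetypenat{d}$, each auxiliary function $g$ (and the recursive $f$ itself) to its black-box transport $\termholnat{g}$, and leaves the application, \textbf{let}, and abstraction skeleton untouched. The two external ingredients are exactly the preceding Galois-equivalence existence result — giving that every $\termholnat{g}$ is $\relnat$-related to $g$ (see also \cref{thm:blackbox}) — and the ADT relatedness statements of \cref{thm:nat_encoded_datatype} for constructors, selectors, and case combinators.

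First I would prove $\relnat$-relatedness by structural induction on $t$, carrying a context of hypotheses $\relnat\app \termholnat{x}\app x$ for every bound variable $x$ in scope. The variable and function cases discharge immediately from the context or from black-box relatedness. For an application $h\app\vec{t}$ I would use the defining property of the function relation $\funrel{\cdot}{\cdot}$: a head related at a type of shape $\funrel{\relnat}{\dotsb}$, applied to $\relnat$-related arguments, yields $\relnat$-related results, so iterated elimination of $\funrel{\cdot}{\cdot}$ against the inductively related arguments closes the case. The \textbf{let} and case cases are congruences: I transport the bound term, descend under the binder after extending the context with the new relatedness hypothesis, and invoke the case-combinator relatedness of \cref{thm:nat_encoded_datatype} to supply the $\funrel{\cdot}{\cdot}$-shaped premise. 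Abstracting over $\vec{x}$ finally re-wraps the relation into $\funrel{\relnat}{\dotsb}$, yielding that $\whiteboxtransport{(\lambda\vec{x}.\app \termholtc{t})}$ and $\lambda\vec{x}.\app \termholtc{t}$ are $\relnat$-related.

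For the structure-preservation claim I would argue purely syntactically from the recursion scheme. Since $\whiteboxtransport{\cdot}$ replaces leaves by leaves and preserves the application/\textbf{let}/case/abstraction shape, it maps each grammar production to the same production; in particular a recursive occurrence of $f$ sitting in tail position is sent to an occurrence of $\termholnat{f}$ still in tail position, and no enclosing constructor is reordered or duplicated, so tail-recursiveness is maintained. First-orderness of applications is preserved because for every first-order head $h$ (a bound variable, auxiliary function, or constructor) the image $\termholnat{h}$ again has a first-order type over $\natty$ by the constructions of \cref{sec:natify_datatypes_details} and the Galois-equivalence result, so each application $\termholnat{h}\app\vec{\whiteboxtransport{t}}$ stays first-order.

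I expect the main obstacle to be the binder bookkeeping in the relatedness induction: correctly threading the per-variable hypotheses $\relnat\app \termholnat{x}\app x$ through nested \textbf{Let} and case binders, and in particular closing the loop for the tail-recursive occurrence of $f$, whose relatedness is not re-derived white-box but imported from the black-box existence theorem. This is precisely the reasoning that the Transfer infrastructure of~\cite{huffmanTransfer} automates, so in the formalisation the work reduces to registering the right transfer rules — the ADT relatedness facts and the black-box relatedness of each $\termholnat{g}$ — and letting the package assemble the structural proof, leaving only the syntactic preservation argument to be checked by hand.
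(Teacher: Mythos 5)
Your proposal is correct and follows essentially the same route as the paper's proof: structural induction on $t$ for the relatedness claim, discharging the recursive occurrence of $f$ via the black-box theorem (\cref{thm:blackbox}) and all other constants via their assumed or ADT-derived relatedness facts, with the Transfer-style function-relation reasoning filling in the application and binder cases. The preservation claim is likewise handled in the paper by a syntactic structural induction, just as you argue, so your write-up is a faithful (more detailed) elaboration of the paper's proof sketch.
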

\begin{proof}
The former is shown by structural induction on $t$,
using that $\termholtc{f}$ is $\relnat$-related to $\termholtcnat{f}$ by~\cref{thm:blackbox}
and all other functions $\termholtc{g}$ are $\relnat$-related to $\termholtcnat{g}$ by assumption.

The latter follows immediately from \cref{def:typerelnatsubst} and structural induction on $t$.
\end{proof}

\renewcommand{\holtc}{\oldholtc}
\renewcommand{\holtcnat}{\oldholtcnat}
\renewcommand{\termholtc}{\oldtermholtc}
\renewcommand{\termholtcnat}{\oldtermholtcnat}

\section{$\foltcnat$ to $\imptc$}\label{sec:holtcnattoimptc_details}

\subsection{Compilation to $\imptc$}\label{sec:holtcnattoimptc_compiler_details}

\cref{fig:holtcnat_to_imp_datatype} shows the datatype representation of $\foltcnat$ as used in the compiler
to $\imptc$, shown in \cref{fig:holtcnat_compiler_appendix}.
\begin{figure}[t]
\begin{gather*}
\begin{aligned}
t \Coloneqq  & \; \holnIf{t_1}{t_2}{t_3} & \text{(if-then-else with natural number condition $t_1$)}\\
  \mid & \; \holnLet{t_1}{t_2} \quad & \text{(bind $t_1$ to the first de Bruijn index in $t_2$)} \\
  \mid & \; \holnLetBound{i} \quad & \text{(variable bound by $i$-th enclosing \textbf{Let})} \\
  \mid & \; \holnArg{i} \quad & \text{($i$-th argument of $f$, i.e.\ $x_i$)} \\
  \mid & \; \holnNum{n} & \text{(natural number $n \in \natty$)} \\
  \mid & \; \holnCall{g}{t_1, \dotsc, t_m} \quad & \text{(call to $g \holhasty \natty^m \purefun \natty$ with arguments $t_1, \dotsc, t_m$)} \\
  \mid & \; \holnTailCall{t_1, \dotsc, t_k} \quad & \text{(recursion with arguments $t_1, \dotsc, t_k$)}
\end{aligned}
\end{gather*}
\caption{Datatype representation of $\foltcnat$ in the metaprogramming language.}\label{fig:holtcnat_to_imp_datatype}
\end{figure}
\begin{figure}[t]
\begin{gather*}
\begin{aligned}
\dens{\holnIf{t_1}{t_2}{t_3}}{b}{r} \define &\; \dens{t_1}{b}{x} \seqi \iif{x}{\dens{t_2}{b}{r}}{\dens{t_3}{b}{r}} & \text{(fresh $x$)} \\
\dens{\holnLet{t_1}{t_2}}{b}{r} \define &\; \dens{t_1}{b}{x} \seqi \dens{t_2}{x \# b}{r} & \text{(fresh $x$)} \\
\dens{\holnLetBound{n}}{b}{r} \define &\; \assign{r}{b\app {!}\app n} & \\
\dens{\holnArg{n}}{b}{r} \define &\; \assign{r}{\htoiargi{f}{n}} & \\
\dens{\holnNum{n}}{b}{r} \define &\; \assign{r}{n} & \\
\dens{\holnCall{g}{t_1, \dotsc, t_m}}{b}{r} \define &\; \dens{t_1}{b}{x_1} \seqi {\dotsc} \seqi \dens{t_m}{b}{x_m} \seqi & \text{(fresh $x_1, \dotsc, x_m$)}  \\
    &\; \assign{\htoiargi{g}{1}}{x_1} \seqi \dotsc \seqi \assign{\htoiargi{g}{m}}{x_m} \seqi & \\
    &\; \call{g^{\imp}}{\htoiret{g}} \seqi  \assign{r}{\htoiret{g}} &\text{($g^{\imp}$ registered for $g$)}\\
\dens{\holnTailCall{t_1, \dotsc, t_k}}{b}{r} \define &\; \dens{t_1}{b}{x_1} \seqi {\dotsc} \seqi \dens{t_k}{b}{x_k} & \text{(fresh $x_1, \dotsc, x_k$)} \\
    &\; \assign{\htoiargi{f}{1}}{x_1} \seqi \dotsc \seqi \assign{\htoiargi{f}{k}}{x_k} \seqi & \\
    &\; \tail &
\end{aligned}
\end{gather*}
\caption{The compiler from the $\foltcnat$ representation in \cref{fig:holtcnat_to_imp_datatype} to $\imptc$.}\label{fig:holtcnat_compiler_appendix}
\end{figure}

\subsection{Correctness Proofs}\label{sec:holtcnattoimptc_correctness_details}
Our goal is to show that $\bigstepCtxtPart{p}{p}{s}{\htoiret{f}}{\termfoltcnat{f}\app \vec{\stateretr{s}{\htoiargi{f}{i}}}}$,
i.e.\ we are interested in the single return register $\htoiret{f}$.
\cref{fig:exec_res_imptc} shows the execution relation for $\imptc$ specialised to single return registers.
\begin{figure}[t]
\begin{mathpar}
  \inferrule*[left=\assignN$_1$]
  {\stateretr{\funupd{s}{r}{\aval{a}{s}}}{r'}=v}
  {\bigstepPart{\assign{r}{a}}{s}{r'}{v}}[\assignN$_1$]

  \inferrule*[left=\ifTrueN$_1$]
  {\stateretr{s}{r} \neq 0 \\ \bigstepPart{p_1}{s}{r}{v}}
  {\bigstepPart{\iif{r}{p_1}{p_2}}{s}{r}{v}}[\ifTrueN]

  \inferrule*[left=\seqN$_1$]
  {\bigstep{p_1}{s}{s'} \\
  \bigstepPart{p_2}{s'}{r}{v}}
  {\bigstepPart{\seq{p_1}{p_2}}{s}{r}{v}}[\seqN$_1$]\label{infrule:exec_res_imptc_seq}

  \inferrule*[left=\ifFalseN$_1$]
  {\stateretr{s}{r} = 0 \\ \bigstepPart{p_2}{s}{r}{v}}
  {\bigstepPart{\iif{r}{p_1}{p_2}}{s}{r}{v}}[\ifFalseN$_1$]

  \inferrule*[left=\callN$_1$]
  {\bigstepPart{pc}{s}{r}{v} \\
  \stateretr{\funupd{s}{r}{v}}{r'} = v'}
  {\bigstepPart{\call{pc}{r}}{s}{r'}{v'}}[\callN$_1$]

  \inferrule*[left=\tailN$_1$]
  {\bigstepCtxtPart{p}{p}{s}{r}{v}}
  {\bigstepCtxtPart{p}{\tail}{s}{r}{v}}[\tailN$_1$]
\end{mathpar}
\caption{Execution relation of $\imptc$ for single return registers.}\label{fig:exec_res_imptc}
\end{figure}
Note that for sequences \makebox{$\bigstepPart{\seq{p_1}{p_2}}{s}{r}{v}$},
the standard execution relation ``$\bigsteparrow$'' (and not ``$\bigsteparrow_r$'') must be used for $p_1$,
that is $\bigstep{p_1}{s}{s'}$.

This is the reason why each compiled $\imptc$ program $p$ is normalised such that no recursive constructor (sequences and if-then-elses)
appears on the left of a sequence.
This way, all goals involving recursive constructors use the execution relation ``$\bigsteparrow_r$'',
saving us from implementing separate recursive automation for~``$\bigsteparrow$''.

\section{$\imptc$ to $\impminus$}

\subsection{$\imptc$ to $\impc$}\label{sec:imptctoimpc_details}
\begin{lemma*}
\begin{figure}[t]
\begin{mathpar}
  \inferrule*
  {s'=\funupd{s}{r}{\aval{a}{s}}}
  {\bigstepTime{\assign{r}{a}}{s}{\const}{(s',\false)}}

  \inferrule*
  {\stateretr{s}{r} \neq 0 \\ \bigstepTime{p_1}{s}{n}{(s',b)}}
  {\bigstepTime{\iif{r}{p_1}{p_2}}{s}{n+\const}{(s',b)}}

  \inferrule*
  {\bigstepTime{p_1}{s}{n_1}{(s', \false)}\\
  \bigstepTime{p_2}{s'}{n_2}{(s'', b)}}
  {\bigstepTime{\seq{p_1}{p_2}}{s}{n_1+n_2+\const}{(s'',b)}}

  \inferrule*
  {\stateretr{s}{r} = 0 \\ \bigstepTime{p_2}{s}{n}{(s',b)}}
  {\bigstepTime{\iif{r}{p_1}{p_2}}{s}{n+\const}{(s',b)}}

  \inferrule*
  {\bigstepTimePart{pc}{s}{n}{r}{v} \\
   s'=\funupd{s}{r}{v}}
  {\bigstepTime{\call{pc}{r}}{s}{n+\const}{(s',\false)}}

  \inferrule*
  {$ $}
  {\bigstepTime{\tail}{s}{\const}{(s,\true)}}
\end{mathpar}
\caption{Auxiliary execution relation used in \cref{lem:tailWhileRules}.}\label{fig:exectailcallflag_complete}
\end{figure}

Let $\bigstepTime{p}{s}{n}{(s',b)}$
denote the modified execution relation shown in \cref{fig:exectailcallflag_complete} that runs
$p$ but stops and flags whenever a tail-recursive call is encountered using \makebox{$b\in\{\false,\true\}$}
Then the following rules are admissible:
\begin{mathpar}
\inferrule*
  {\bigstepTime{p}{s}{n}{(s',\false)}}
  {\bigstepTime{p}{s}{n}{s'}}

\inferrule*
  {\bigstepTime{p}{s_1}{n_1}{(s_2,\true)}\\
   \bigstepCtxtTime{tp}{tp}{s_2}{n_2}{s_3}}
  {\bigstepCtxtTime{tp}{p}{s_1}{n_1+n_2}{s_3}}
\end{mathpar}
\end{lemma*}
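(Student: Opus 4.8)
The plan is to prove both admissible rules at once by rule induction on the derivation of the flagged relation $\bigstepTime{p}{s}{n}{(s',b)}$, letting the flag $b$ decide which conclusion to establish. Concretely, for every context program $tp$ I would show the combined claim: if $b=\false$ then $\bigstepCtxtTime{tp}{p}{s}{n}{s'}$, and if $b=\true$ then for all $n_2,s_3$ with $\bigstepCtxtTime{tp}{tp}{s'}{n_2}{s_3}$ one has $\bigstepCtxtTime{tp}{p}{s}{n+n_2}{s_3}$. Specialising this to $b=\false$ and $b=\true$ yields exactly the two rules of the lemma. The flagged relation carries no context of its own, which is what keeps the induction clean: $tp$ only enters when a $\true$ flag is discharged against the ordinary relation.

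The base cases are immediate. For $\assign{r}{a}$ (flag $\false$) the ordinary rule~\nameref{infrule:assign} applies with the same state update and step count. For $\tail$ (flag $\true$) we have $s'=s$ and $n=\const$; given a context execution $\bigstepCtxtTime{tp}{tp}{s}{n_2}{s_3}$, the ordinary rule~\nameref{infrule:tail} yields $\bigstepCtxtTime{tp}{\tail}{s}{n_2+\const}{s_3}$, matching the required $n+n_2$. This is precisely the point at which the context program $tp$ is ``unfolded'' in place of a recursive call. The call case (flag $\false$) is equally direct, since its premise is already an ordinary partial execution of the callee, so rule~\nameref{infrule:call} produces the corresponding ordinary execution.

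For the conditional cases the flagged derivation takes one branch and propagates its flag $b$ unchanged; I would apply the induction hypothesis to that branch (threading the context execution through it when $b=\true$) and then re-apply the matching ordinary rule~\nameref{infrule:iftrue} or~\nameref{infrule:iffalse}. The sequencing case $\seq{p_1}{p_2}$ is the heart of the argument: the flagged rule \emph{forces} the left component to finish with flag $\false$, reflecting that a recursive call can occur only in tail position. Hence the $\false$-part of the hypothesis gives an ordinary execution $\bigstepCtxtTime{tp}{p_1}{s}{n_1}{s'}$, while the overall flag $b$ is inherited from $p_2$. Applying the hypothesis to $p_2$ in its $\false$- or $\true$-case (routing any context execution into the $p_2$ sub-derivation) and combining via the ordinary rule~\nameref{infrule:seq} closes the case, with the step counts adding up after the routine bookkeeping of the $\const$ summands.

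I expect the main obstacle to be formulating the combined induction hypothesis so that its two mutually dependent conclusions compose correctly in the sequencing and conditional cases, in particular ensuring that the context execution $\bigstepCtxtTime{tp}{tp}{s'}{n_2}{s_3}$ is consumed by exactly the sub-derivation carrying the $\true$ flag (the second component of a sequence, or the taken branch of a conditional). Once the statement is phrased as above, each case reduces to one application of an ordinary rule together with the appropriate half of the hypothesis; the only residual work is checking that the fixed $\const$ step offsets line up, which is routine.
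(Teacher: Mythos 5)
Your proof is correct and follows essentially the same route as the paper, whose entire argument is the one-line sketch ``by induction over the execution rules'': your combined induction hypothesis (ordinary execution for flag $\false$, composability with any context execution $\bigstepCtxtTime{tp}{tp}{s'}{n_2}{s_3}$ for flag $\true$) is exactly the strengthening that makes that induction go through, with the sequencing case correctly exploiting that the left component must carry flag $\false$. The only detail you gloss over is that the flagged $\callop$ rule adds $\const$ steps while the ordinary one does not, but this is absorbed by monotonicity of the at-most-$n$ premise, i.e.\ the routine constant bookkeeping you already flagged.
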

\begin{proof}[Proof sketch]
By induction over the execution rules.
\end{proof}

\begin{theorem*}
Let $\impconstrfont{cnt}$ be a fresh register with respect to $\vars\app p$.
Let $p\psubst{t}{x}$ denote the syntactic substitution of any occurrence of $x$ by $t$ in a program $p$.
Define
\begin{gather*}
\imptctoimpc{p} \define
\seq{\assign{\impconstrfont{cnt}}{\impconstrfont{1}}}{\while{\impconstrfont{cnt}}{\seq{\assign{\impconstrfont{cnt}}{\impconstrfont{0}}}{p\psubst{(\assign{\impconstrfont{cnt}}{\impconstrfont{1}})}{\tail}}}}.
\end{gather*}
Then it holds that
$\bigstepCtxtTimePart{p}{p}{s}{n}{\vars{p}}{s'}$ \textiff $\bigstepTimePart{\imptctoimpc{p}}{s}{n+\const}{\vars{p}}{s'}$.
\end{theorem*}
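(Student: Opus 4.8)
The plan is to reduce the biconditional to the flag-based execution relation of \cref{lem:tailWhileRules} together with a syntactic substitution lemma, and then run two mirrored inductions, one per direction. Throughout I exploit that $\impconstrfont{cnt}$ is fresh for $\vars{p}$, so the original program never reads or writes it, and the only place where $p\psubst{(\assign{\impconstrfont{cnt}}{\impconstrfont{1}})}{\tail}$ can touch $\impconstrfont{cnt}$ is at the (unique, tail-positioned) recursive call reached in a given run.

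First I would prove a bridging lemma connecting one execution of the loop body $\seq{\assign{\impconstrfont{cnt}}{\impconstrfont{0}}}{p\psubst{(\assign{\impconstrfont{cnt}}{\impconstrfont{1}})}{\tail}}$ to a single flag-segment of $p$. Concretely, by induction on the flag execution $\bigstepTime{p}{s}{n}{(s',b)}$ of \cref{fig:exectailcallflag} (equivalently, on the structure of $p$), I show that running $p\psubst{(\assign{\impconstrfont{cnt}}{\impconstrfont{1}})}{\tail}$ from $s$ terminates exactly when $p$ flag-terminates, reaching a state that agrees with $s'$ on $\vars{p}$, with $\impconstrfont{cnt}$ set to $\true$ precisely when $b=\true$ and left unchanged when $b=\false$. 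The cases are routine: assignments, calls, and the non-recursive constructs are unaffected by the substitution and leave $\impconstrfont{cnt}$ untouched (flag $\false$); sequencing and if-then-else propagate the flag of the taken branch via the induction hypothesis (the flag semantics forces the left of a sequence to flag $\false$, matching tail-position-only recursion); and the $\tail$ case becomes $\assign{\impconstrfont{cnt}}{\impconstrfont{1}}$, which sets $\impconstrfont{cnt}=\true$ with the state otherwise unchanged, matching the $(s,\true)$ step. Prefixing the reset $\assign{\impconstrfont{cnt}}{\impconstrfont{0}}$ then makes the final value of $\impconstrfont{cnt}$ exactly $b$.

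With this bridge in place, the two directions follow the admissible rules of \cref{lem:tailWhileRules}. For the forward direction I induct on $\bigstepCtxtTime{p}{p}{s}{n}{s'}$, decomposed via \cref{lem:tailWhileRules} into either a single flag-$\false$ segment or a flag-$\true$ segment followed by a recursive context run. In the first case the bridge shows one loop iteration leaves $\impconstrfont{cnt}=\false$, so the guard fails and the loop exits with the correct state; in the second case the bridge leaves $\impconstrfont{cnt}=\true$, the guard holds, and the induction hypothesis handles the continuation from the tail-call state. The backward direction is the same argument read off the $\whileop$ rules of \cref{fig:whilerules}: I induct on the loop execution, use the bridge to classify each iteration as a flag-$\false$ (exit) or flag-$\true$ (continue) segment, and reassemble the context execution with the admissible rules. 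The initial $\assign{\impconstrfont{cnt}}{\impconstrfont{1}}$ merely establishes the guard for the first iteration and contributes to the additive constant.

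The main obstacle is the time bookkeeping behind the single additive $\const$. Since all languages here are deterministic, I would aim at the exact identity (while time) $=$ (recursive time) $+\const$ rather than a loose bound. A naive accounting gives the loop an overhead growing with the number $k$ of recursive unfoldings (one guard check, one sequencing, and one reset per iteration), so a genuinely constant gap requires the per-iteration loop overhead to cancel against the per-call cost charged by the \textsc{Rec} rule of \cref{fig:imptc_rules}. This is exactly the freedom in the fixed non-negative constants of the semantics: choosing the recursion constant to equal the combined cost of the guard, the body sequencing, the reset, and the assignment replacing $\tail$ makes the $k$-dependent terms telescope, leaving only the setup cost (the initial assignment, the one extra completing iteration, and the final failed guard) as $\const$. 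Given this exact relation, the biconditional between the two partial ($\leq$) relations is immediate in both directions by determinism.
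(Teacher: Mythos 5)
Your proposal takes essentially the same route as the paper's proof: the paper likewise proves a bridging lemma stating that the flag execution $\bigstepDelimTime{p}{s}{n}{s'}{b}$ of \cref{lem:tailWhileRules} holds \textiff $p\psubst{(\assign{\impconstrfont{cnt}}{\impconstrfont{1}})}{\tail}$ runs from $s$ in the same number of steps to a state agreeing with $s'$ on $\vars\app p$ and holding $b$ in $\impconstrfont{cnt}$ (by induction over the execution rules, generalised over register sets containing $\vars\app p$ but excluding $\impconstrfont{cnt}$), and then relates the $\whileop$-rules of \cref{fig:whilerules} to the admissible rules of \cref{lem:tailWhileRules} by induction over the respective rules. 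Your telescoping analysis of the constants is a genuine addition rather than a deviation: the paper's sketch is silent on why the overhead is a single additive $\const$ rather than growing with the number of loop iterations, and your observation that the $\imptc$ recursion constant must absorb the per-iteration overhead (guard, body sequencing, reset, and the assignment replacing $\tail$) is exactly the accounting needed to make the stated bound hold.
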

\begin{proof}[Proof sketch]
We show that
$\bigstepDelimTime{p}{s}{n}{s'}{b}$ holds
\textiff
\[
\exists s''.\app \bigstepTime{p\psubst{(\assign{\impconstrfont{cnt}}{\impconstrfont{1}})}{\tail}}{s}{n}{s''} \land s''\app \impconstrfont{cnt} = b \land\forall r\in\vars{p}.\app s'\app p = s''\app p
\]
by induction over the execution rules, generalising over $S$
such that $\vars{p}\subseteq S$ and $\impconstrfont{cnt}\notin S$.
Then the semantics for the while-loop with rules from \cref{fig:whilerules} are equivalent to the semantics of the compiled program with rules from \cref{lem:tailWhileRules} by straightforward induction over the respective rules.
\end{proof}

\subsection{$\impc$ to $\impw$}\label{sec:impctoimpw_details}

\begin{lemma*}
$\bigstepTime{p\psubst{m\app x}{x}}{s}{n}{s'}$ implies $\bigstepTime{p}{s\comp m}{n}{s'\comp m}$
for injective $m$.
\end{lemma*}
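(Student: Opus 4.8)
The plan is to prove the statement by rule induction on the execution derivation $\bigstepTime{p\psubst{m\app x}{x}}{s}{n}{s'}$, generalising over $s$, $n$, $s'$ and $p$. Since the register substitution $p\psubst{m\app x}{x}$ commutes with every program constructor, the executed program and $p$ always share the same top-level shape, so the induction reduces to one case per command. Before starting I would establish two auxiliary facts. \emph{Renaming commutes with evaluation}: for every atom and expression $a$ we have $\aval{a\psubst{m\app x}{x}}{s} = \aval{a}{s\comp m}$, shown by structural induction on $a$ using $\aval{\avar{m\app r}}{s} = \stateretr{s}{m\app r} = \stateretr{(s\comp m)}{r} = \aval{\avar{r}}{s\comp m}$ for registers, invariance of constants, and compositionality for $A_1 \otimes A_2$. \emph{Updates commute with precomposition}: $\funupd{s}{m\app r}{v}\comp m = \funupd{(s\comp m)}{r}{v}$. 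This is the one place injectivity is indispensable: evaluating the left-hand side at $r'$ yields $v$ exactly when $m\app r = m\app r'$, which by injectivity holds iff $r = r'$, matching the right-hand side.

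With these facts in hand the cases are routine. For an assignment, the substituted program is $\assign{m\app r}{a\psubst{m\app x}{x}}$ and reaches $\funupd{s}{m\app r}{\aval{a\psubst{m\app x}{x}}{s}}$ in $\const$ steps; by the first fact the stored value equals $\aval{a}{s\comp m}$, and by the second fact precomposing with $m$ gives $\funupd{(s\comp m)}{r}{\aval{a}{s\comp m}}$, which is exactly the state reached by running $\assign{r}{a}$ from $s\comp m$ with the same step count. For the conditional, the guard register $r$ becomes $m\app r$, and since $\stateretr{s}{m\app r} = \stateretr{(s\comp m)}{r}$ the original and renamed programs take the same branch; the inductive hypothesis applied to the executed branch closes the case with identical timing. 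The same observation handles $\whileop$, where the inductive hypothesis is additionally applied to the recursive unfolding in the \whileTrueN-style rule. For a sequence the substituted program splits as $\seq{p_1\psubst{m\app x}{x}}{p_2\psubst{m\app x}{x}}$; applying the inductive hypothesis to both halves, with the intermediate state $s''$ giving rise to $s''\comp m$, reassembles the derivation with the same additive time.

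The main obstacle is not any individual case but arranging the induction so that it interacts cleanly with the substitution: I would phrase the inductive statement with $p$ universally quantified and carry the hypothesis that $p\psubst{m\app x}{x}$ equals the executed program, then invert this equation in each case to recover the shape of $p$ (relying on the fact that substitution preserves constructors). The genuinely load-bearing step is the update-commutation fact, the sole point at which injectivity of $m$ is used and without which distinct registers could be conflated, breaking the correspondence between $s'\comp m$ and the state reached by $p$.
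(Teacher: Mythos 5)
Your proposal is correct and follows essentially the same route as the paper, whose entire proof is the one-line sketch ``By induction over the execution of $p$'' --- i.e.\ rule induction on the big-step derivation, which is exactly what you carry out. Your two auxiliary facts (renaming commutes with evaluation, and state updates commute with precomposition, the latter being where injectivity of $m$ enters) are the natural elaboration of that sketch and correctly identify the load-bearing details the paper leaves implicit.
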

\begin{proof}[Proof sketch]
By induction over the execution of $p$.
\end{proof}
The inlining induces a blow-up that is dependent on the program size (but not the input), as shown below.
\begin{theorem*}
$\bigstepTimePart{p}{s}{n}{\vars\app p}{s'}$ \textiff $\bigstepTimePart{\impctoimpw{p}}{s}{f(n)}{\vars\app p}{s'}$,
where $f(n)$ is linear in $n \cdot \abs{p}$.
\end{theorem*}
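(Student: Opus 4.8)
The plan is to prove the correspondence by rule induction on the big-step execution relation, using that the inlining $\impctoimpw{\cdot}$ is defined by structural recursion and therefore commutes with every non-call construct: $\impctoimpw{\seq{p_1}{p_2}}=\seq{\impctoimpw{p_1}}{\impctoimpw{p_2}}$ and likewise for if-then-else and while-loops, whereas assignments (and the arithmetic expressions they contain, which never mention calls) are left unchanged. For all of these, $p$ and $\impctoimpw{p}$ take the same control-flow decisions, so the induction hypotheses on the subprograms compose directly; each such step of $p$ is matched by $\landau{1}$ steps of $\impctoimpw{p}$, and agreement on $\vars\app p$ is immediate since no register outside $\vars\app p$ feeds back into these constructs.

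The only genuinely new case is a call $\call{p^w}{r}$. Here I would unfold $\impctoimpw{\call{p^w}{r}}$ into its copy-in prologue $\assign{m\app r_1}{r_1}\seqi\dotsb\seqi\assign{m\app r_n}{r_n}$, the renamed body $p^w\psubst{m\app x}{x}$, and the copy-out epilogue $\assign{r}{m\app r}$. The prologue yields a state $\sigma$ with $\sigma\comp m$ agreeing with $s$ on $\vars\app p^w$, and \cref{lem:neat_subst} (which holds in both directions by the same induction, as $m$ is injective) relates the run of $p^w\psubst{m\app x}{x}$ from $\sigma$ to a run of $p^w$ from $\sigma\comp m$ with the \emph{same} step count. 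Consequently the value deposited in register $m\app r$ equals the value $v$ that the call rule attributes to $p^w$, and the epilogue copies it back, reproducing $s'=\funupd{s}{r}{v}$. Crucially, $m$ maps into fresh registers outside $\vars\app p$, so the renamed body touches none of $r_1,\dotsc,r_n$ and only the intended result register $r$ is overwritten---exactly as the scoped call semantics prescribes---which is precisely why one obtains agreement on $\vars\app p$ rather than full state equality.

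For the running-time bound I would account per executed construct. A non-call step contributes $\landau{1}$ inlined steps; an executed call to $p^w$ contributes the prologue and epilogue cost $\landau{\abs{p^w}}\leq\landau{\abs{p}}$ on top of the body's own runtime, which the call rule already charges to $n$. Since the per-step constants are positive, the number of executed constructs---and in particular of calls---is $\landau{n}$, so the body runtimes and constant overheads sum to $\landau{n}$ while the copy-in/copy-out overheads sum to $\landau{n\cdot\abs{p}}$; hence $f(n)$ is linear in $n\cdot\abs{p}$. The forward implication is obtained by this induction on the run of $p$, and the converse by the symmetric induction on the run of $\impctoimpw{p}$ (equivalently, invoking determinism of the $\imp$-languages to transfer both termination and the final state back to $p$).

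I expect the call case to be the main obstacle: reconciling the scoped call semantics---where $p^w$ is evaluated against the whole ambient state but only its result register is observed---with the concrete copy-in/run/copy-out inlining, and using \cref{lem:neat_subst} to argue that relocating $p^w$ into fresh memory changes neither its result nor any register in $\vars\app p$.
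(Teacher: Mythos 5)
Your proposal is correct and follows essentially the same route as the paper: its (one-sentence) proof sketch likewise handles the call case via the substitution lemma \cref{lem:neat_subst} and dispatches everything else by a straightforward induction over the execution relation. Your elaboration of the copy-in/run/copy-out bookkeeping, the freshness argument for agreement on $\vars\app p$, and the $\landau{n\cdot\abs{p}}$ cost accounting simply fills in details the paper leaves implicit.
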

\begin{proof}[Proof sketch]
The inlining is correct by \cref{lem:neat_subst}, so correctness follows by a simple induction.
\end{proof}

\subsection{$\impw$ to $\impminus$}\label{sec:impwtoimpminus_details}

\begin{lemma*}
Assume that $\maxset\{\maxconst{a}, \aval{a}{s}, \maxconst{s}\}< 2^w$.
Then it holds that\newline $\bigstepTime{\bitblastAexp{a}{w}{r}}{\bitblastState{s}{w}}{f(w)}{\bitblastState{\funupd{s}{r}{\aval{a}{s}}}{w}}$,
where $f(w)$ is linear in $w$.
\end{lemma*}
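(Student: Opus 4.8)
The plan is to proceed by structural case analysis on the arithmetic expression $a$: by the grammar it is either an atom, $\aconst{n}$ or $\avar{r'}$, or a binary operation, $\aplus{A_1}{A_2}$ or $\aminus{A_1}{A_2}$. In every case the hypothesis $\maxset\{\maxconst{a}, \aval{a}{s}, \maxconst{s}\} < 2^w$ ensures that the value to be stored, every constant occurring in $a$, and every register content of $s$ fit into $w$ bits, so no bit beyond position $w$ is ever relevant. The time bound falls out uniformly: each case emits $\landau{w}$ commands, each of which runs in constant time by the rules of \cref{fig:imp_minus_def}, so $f(w)$ is linear in $w$; the remaining work is functional correctness.

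The atomic cases are direct. For $a = \aconst{n}$ the bit-blasting writes the bits of $n$ into the registers representing $r$; since $\aval{\aconst{n}}{s} = n < 2^w$ these are exactly the bits prescribed by $\bitblastState{\funupd{s}{r}{n}}{w}$ via \cref{def:bitblast}, and the all-zero indicator $r_0$ is set accordingly. For $a = \avar{r'}$ the bit-blasting copies the registers of $r'$ into those of $r$, which by the definition of $\bitblastState{s}{w}$ yields the updated state directly. Both use $w+1$ constant-time assignments.

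The heart of the argument is the binary case; I treat addition, subtraction being symmetric. Here $\bitblastAexp{\aplus{A_1}{A_2}}{w}{r} = \impminusminusadd{A_1}{A_2}{r}{w}$, which first copies the low $w$ bits of the operands into the dedicated registers $a_1, \dotsc, a_w$ and $b_1, \dotsc, b_w$ (correct by the atomic reasoning above, using $\aval{A_1}{s}, \aval{A_2}{s} < 2^w$) and then runs $\adder{r}{w}$. The key step is an induction on the number $k$ of full adders executed in $\fulladder{r}{1}; \dotsb; \fulladder{r}{k}$, with invariant: the registers $r_1, \dotsc, r_k$ hold the low $k$ bits of $\aval{A_1}{s} + \aval{A_2}{s}$, the register $carry$ holds the carry into position $k+1$, and $zero$ records whether those $k$ bits are all $\false$. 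The base case $k = 0$ is trivial; the step uses the correctness of a single full adder, which follows by unfolding its definition. Since $\aval{\aplus{A_1}{A_2}}{s} = \aval{A_1}{s} + \aval{A_2}{s} < 2^w$ by hypothesis, the carry out of position $w$ is $\false$, so the $w$ result bits capture the entire sum; the trailing resets $\assign{carry}{\false}$ and $\assign{zero}{\false}$, together with the setting of $r_0$, then make the final state agree with $\bitblastState{\funupd{s}{r}{\aval{a}{s}}}{w}$ on the auxiliary registers as well.

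The main obstacle is stating and discharging this adder invariant cleanly, since it must track the partial-sum bits, the carry, and the all-zero indicator simultaneously, and the no-overflow consequence of $\aval{a}{s} < 2^w$ must be invoked exactly where the final carry has to vanish. The subtraction case adds one subtlety: because values are natural numbers, $\aminusop$ denotes truncated subtraction, so the borrow-based full subtractor must be shown to yield $0$ precisely when $\aval{A_1}{s} < \aval{A_2}{s}$, matching the monus semantics; the bound $\aval{a}{s} < 2^w$ again guarantees that the relevant bits fit.
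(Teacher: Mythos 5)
Your proposal is correct and follows essentially the same route as the paper's own proof: both decompose the interesting case into the operand-copying programs and the adder, reduce everything to the constant running time of a single full adder, and obtain the linear-in-$w$ bound by induction along the chain of full adders (the paper phrases this as induction on $w$ for $\adder{}{}$ and $\copyopnd{}{}{}$). The paper's sketch makes only the timing analysis explicit, whereas you additionally spell out the functional-correctness invariant (partial-sum bits, carry, zero flag) and the truncated-subtraction subtlety --- details the paper leaves implicit but which are consistent with its argument.
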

\begin{proof}[Proof Sketch]
The proof is based on showing an upper bound on the running time of each of the functions $\copyopnd{b}{y}{a}$ and $\adder{v}{w}$, which in turn depends on the running time of $\fulladder{r}{i}$.
The running time of $\fulladder{r}{i}$ is a constant.
The running times of $\adder{}{}$ and $\copyopnd{}{}{}$ are shown to be linear in $w$ by induction on~$w$.
\end{proof}

\section{$\imp$-Definitions}\label{sec:all_imps}
\cref{fig:imptc_complete} shows the complete specification of $\imptc$,
\cref{fig:whilerules_appendix} the rule for $\whileop$, and
\cref{fig:imp_minus_def_appendix} the complete specification of $\impminus$.

\begin{figure}[t]
\begin{subfigure}[t]{0.4\textwidth}
\begin{gather*}
\aval{\aconst{n}}{s}\define n,
\hspace{0.9em}
\aval{\avar{r}}{s}\define \stateretr{s}{r}
\end{gather*}
\caption{Evaluation of atoms $A$.}
\end{subfigure}
\begin{subfigure}[t]{0.6\textwidth}
\begin{gather*}
\aval{A_1\otimes A_2}{s}\define \aval{A_1}{s} \otimes \aval{A_2}{s},
\text{ for } \otimes\in\{+,-\}
\end{gather*}
\caption{Evaluation rules for arithmetic expressions.}
\end{subfigure}
\begin{subfigure}[t]{\textwidth}
\begin{mathpar}
  \inferrule*[left=\assignN]
  {s'=\funupd{s}{r}{\aval{a}{s}}}
  {\bigstepTime{\assign{r}{a}}{s}{\const}{s'}}[\assignN]

  \inferrule*[left=\ifTrueN]
  {\stateretr{s}{r} \neq 0 \\ \bigstepTime{p_1}{s}{n}{s'}}
  {\bigstepTime{\iif{r}{p_1}{p_2}}{s}{n+\const}{s'}}[\ifTrueN]

  \inferrule*[left=\seqN]
  {\bigstepTime{p_1}{s}{n_1}{s'} \\
  \bigstepTime{p_2}{s'}{n_2}{s''}}
  {\bigstepTime{\seq{p_1}{p_2}}{s}{n_1+n_2+\const}{s''}}[\seqN]

  \inferrule*[left=\ifFalseN]
  {\stateretr{s}{r} = 0 \\ \bigstepTime{p_2}{s}{n}{s'}}
  {\bigstepTime{\iif{r}{p_1}{p_2}}{s}{n+\const}{s'}}[\ifFalseN]

  \inferrule*[left=\callN]
  {\bigstepTimePart{pc}{s}{n}{r}{v} \\
   s'=\funupd{s}{r}{v}}
  {\bigstepTime{\call{pc}{r}}{s}{n}{s'}}[\callN]

  \inferrule*[left=\tailN]
  {\bigstepCtxtTime{p}{p}{s}{n}{s'}}
  {\bigstepCtxtTime{p}{\tail}{s}{n+\const}{s'}}[\tailN]
\end{mathpar}
\caption{Execution relation for commands.}
\end{subfigure}
\caption{Semantics of $\imptc$.}\label{fig:imptc_complete}
\end{figure}

\begin{figure}[t]
  \begin{mathpar}
    \inferrule*[left=\whileFalseN]
    {\stateretr{s}{r} = 0}
    {\bigstepTime{\while{r}{p}}{s}{\const}{s}}
    [\whileFalseN]

  \inferrule*[left=\whileTrueN]
    {\stateretr{s_1}{r} \neq 0\\
     \bigstepTime{p}{s_1}{n_1}{s_2}\\
     \bigstepTime{\while{r}{p}}{s_2}{n_2}{s_3}}
    {\bigstepTime{\while{r}{p}}{s_1}{n_1+n_2+\const}{s_3}}
    [\whileTrueN]
  \end{mathpar}
  \caption{Execution relation for $\whileop$ from $\impwc$ and $\impw$.}\label{fig:whilerules_appendix}
\end{figure}

\begin{figure}[t]
  \begin{mathpar}
    \inferrule*
    {b\in\{\false,\true\}\\
    s'=\funupd{s}{r}{b}}
    {\bigstepTime{\assign{r}{b}}{s}{1}{s'}}

    \inferrule*
    {\stateretr{s}{r} \neq \false \\ \bigstepTime{p_1}{s}{n}{s'}}
    {\bigstepTime{\iif{r}{p_1}{p_2}}{s}{n+1}{s'}}

    \inferrule*
    {\bigstepTime{p_1}{s}{n_1}{s'} \\
    \bigstepTime{p_2}{s'}{n_2}{s''}}
    {\bigstepTime{\seq{p_1}{p_2}}{s}{n_1+n_2+1}{s''}}

    \inferrule*
    {\stateretr{s}{r} = \false \\ \bigstepTime{p_2}{s}{n}{s'}}
    {\bigstepTime{\iif{r}{p_1}{p_2}}{s}{n+1}{s'}}

    \inferrule*
    {\stateretr{s}{r} = \false}
    {\bigstepTime{\while{r}{p}}{s}{1}{s}}\hspace{1em}
    \inferrule*
    {\stateretr{s_1}{r} \neq \false\\
    \bigstepTime{p}{s_1}{n_1}{s_2}\\
    \bigstepTime{\while{r}{p}}{s_2}{n_2}{s_3}}
    {\bigstepTime{\while{r}{p}}{s_1}{n_1+n_2+2}{s_3}}
  \end{mathpar}
  \caption{Execution relation for $\impminus$.}\label{fig:imp_minus_def_appendix}
\end{figure}

\end{document}